\newcommand{\stkout}[1]{\ifmmode\text{\sout{\ensuremath{#1}}}\else\sout{#1}\fi}
\journal{Discrete Optimization %Special Issue on Optimization and Discrete Geometry
}
\newcommand{\sgn}{\text{sgn}}
\newcommand{\relu}[1]{\big[#1\big]_+}
\newcommand{\ol}{\overline}
\newcommand{\new}{\text{new}}
\newcommand{\poly}{\text{poly}}
\def\Rbb{\mathbb{R}}
\def\0{\textbf{0}}
\def\eps{\varepsilon}
\newtheorem{theorem}{Theorem}[section]
\newtheorem{observation}{Observation}
\newtheorem{remark}[theorem]{Remark}
\newtheorem{proposition}[theorem]{Proposition}
\newtheorem{lemma}[theorem]{Lemma}
\newtheorem{corollary}[theorem]{Corollary}
\newtheorem{definition}{Definition}[section]
\newenvironment{proof}{\noindent{\bf Proof.}}{\hfill$\square$\medskip \\ }
\newenvironment{proof*}[1]{\noindent{\bf Proof of #1.}}{\hfill$\square$\medskip\\}
\let\oldequation\equation
\let\oldendequation\endequation
\renewenvironment{equation}
{\linenomathNonumbers\oldequation}
{\oldendequation\endlinenomath}
\begin{document}

\begin{frontmatter}

\title{Complexity of Training ReLU Neural Network}

%% Group authors per affiliation:
\author{Digvijay Boob, Santanu S. Dey, Guanghui Lan}
\address{Industrial and Systems Engineering, Georgia Institute of Technology}

\begin{abstract}
In this paper, we explore some basic questions on the complexity of training neural networks with ReLU activation function. We show that it is NP-hard to train a two-hidden layer feedforward ReLU neural network. If dimension of the input data and the network topology is fixed, then we show that there exists a polynomial time algorithm for the same training problem. We also show  that if sufficient over-parameterization is provided in the first hidden layer of ReLU neural network, then there is a polynomial time algorithm which finds weights such that output of the over-parameterized ReLU neural network matches with the output of the given data.
\end{abstract}

\begin{keyword}
NP-hardness\sep Neural Network \sep ReLU Activation Function \sep 2-hyperplane Separability
\end{keyword}

\end{frontmatter}

%\linenumbers

\section{Introduction} 

Deep neural networks (DNNs) are functions computed on a graph parameterized by its edge weights. More formally, the graph corresponding to a DNN is defined by input and output dimensions $w_0, w_{k} \in \mathbb{Z}_{+}$, the number of hidden layers $k \in \mathbb{Z}_{+}$, and a sequence of $k$ natural numbers $w_1, w_2, \dots, w_{k}$ representing the number of nodes in each of the hidden $k$-layers. The function computed on a DNN graph is:

\[f := %a_{K+1} \circ 
\tau \circ  a_{k}\circ \dots \circ a_2 \circ \tau \circ a_1,\]
where $\circ$ is function composition, $\tau$ is a nonlinear function (applied componentwise) called as the activation function, and $a_i: \mathbb{R}^{w_{i - 1}} \rightarrow \mathbb{R}^{w_{i}}$ are affine functions. 

Given the input and corresponding output data, the problem of training a deep neural network can be thought of as determining the edge weights of the directed layered graph that determine the affine functions $a_i$'s for which the output of the neural network matches the output data as closely as possible. Formally, given a set of input and output data $\{(x^i, y^i)\}_{i = 1}^N$ where $(x^i, y^i)\ \in \mathbb{R}^{w_0} \times  \mathbb{R}^{w_{k}}$, and a loss function $l: \mathbb{R}^{w_{k}}\times \mathbb{R}^{w_{k}}\rightarrow \mathbb{R}_{\ge 0}$ (for example, $l$ can be the square loss function), the task is to determine the weights that define the affine function  $a_i$'s such that 
\begin{equation}
\sum_{i = 1}^Nl(f(x^i),y^i) \label{eq:obj}
\end{equation}
is minimized.   

Some commonly studied activation functions are: threshold function, sigmoid function and ReLU function. ReLU is one of the most important activation functions used widely in applications. Despite its wide use, the question of computational complexity of training multi-layer fully-connected ReLU neural network has remained open. This paper makes contributions in this direction. Before formally stating our results, we first present the current understanding of the computational complexity in the literature.

\paragraph{Complexity of training DNNs with threshold activation function}  
The threshold (sign) function is given by 

\[\sgn(x) := \begin{cases}
1 &\text{if }x >0\\ -1 &\text{if }x<0 
\end{cases}.\]
%\deleted{Neural network with threshold activation function is also called as binary neural network in modern machine learning literature.} 
It was shown by Blum et al.~\cite{BR88} that the problem of training a simple two layer neural network with two nodes in the first layer and one node in the second layer while using threshold activation function at all the nodes is NP-complete. The training problem turns out to be equivalent to separation by two hyperplanes problem which was shown to be NP-complete by Megiddo \cite{megiddo88}.  There are other hardness results such as crypto hardness for intersection of k-hyperplanes which apply to %\replaced{
neural networks with threshold activation function.%}{binary neural networks} \cite{SB14, KS09}. %\deleted{In~\cite{SB14}, Exercise 3 of Section 8.7, it is stated that even the problem of training a neural network with threshold activation function, three or more nodes in the first hidden layer and one node in the second hidden layer is NP-hard.}

\paragraph{DNNs with rectified linear unit (ReLU) activation function}

Theoretical worst case results presented above, along with limited empirical successes led to DNN's going out of favor by late 1990s. However, in recent times, DNNs became popular again due to the success of first-order gradient based heuristic algorithms for training. This success started with the work of \cite{HOT06} which presented empirical evidence that if DNNs are initialized properly, then we can find good solutions in reasonable runtime. This work was soon followed by series of early successes of deep learning in natural language processing \cite{CW08}, speech recognition \cite{MDH12} and visual object classification \cite{KSH12}. It was empirically shown in \cite{CSMBO16} that a sufficiently over-parameterized neural network can be trained to global optimality. 

These gradient-based heuristics are not useful for %\deleted{binary} 
neural networks %\added{
with threshold activation function %} 
as there is no gradient information. Even networks with sigmoid activation function fell out of favor because gradient information is not valuable when input values are large \cite{Hochreiter01gradientflow}. The popular neural network architecture uses \emph{ReLU activations} on which the gradient based methods are useful. Formally, the ReLU function is given by: $[x]_+ := \max(x,0)$.

\paragraph{Related literature}
As discussed before, most hardness results so far are for %\deleted{binary}
neural networks %\added{
with threshold activation function %}%
\cite{BR88, KS09, SB14}. There are also  limited results for ReLU that we discuss next: Recently, \cite{LSS14} examined ReLU activation from the point of view that if two connected ReLU nodes are appropriately designed,  then it yields an approximation to the threshold function. Hence training problem for such a class of ReLU network should be as hard as %\replaced{
training a %}{binary} 
neural network %\added{
with threshold activation function%}
. Similar results are shown by \cite{DSS94}. In both these papers, in order to %\replaced{
approximate the threshold activation function%}{model ``a shifted ReLU subtracted from another ReLU"}
, the neural network studied is not a fully connected network.  More specifically, in the underlying graph of such a neural network, each node in the second hidden layer is connected to exactly one distinct node in the first hidden layer, weight of the connecting edge is set to $-1$ with the addition of some positive bias term. Figure \ref{livni_vs_full} shows the difference between ReLU network studied by \cite{LSS14, DSS94} and fully connected ReLU network. The architecture artificially restricts the form of the affine functions in order to prove NP-hardness. In particular, it requires connecting hidden layer matrix to be a square diagonal matrix. Due to this restriction, it was unclear whether allowing non-diagonal entries of the matrix to be non-zero would make problem easier (more parameters hence higher power to neural network function) or more challenging (more parameters so more things to decide).
%does not fit into fully connected ReLU neural network. 

\begin{figure}[t]
	\begin{subfigure}{0.5\textwidth}
		\centering
		\includegraphics[width = 0.4\linewidth ]{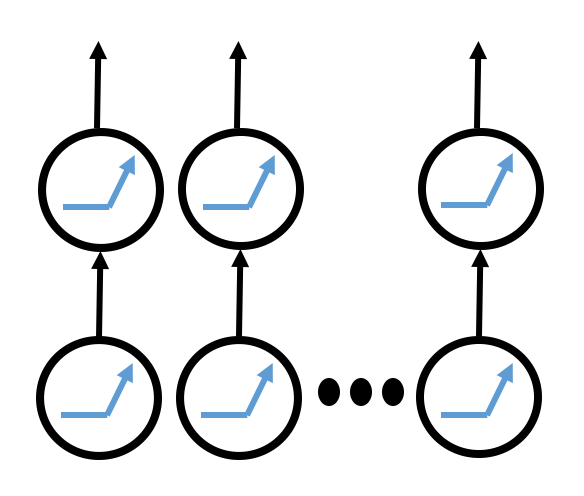}
		\caption{ReLU network studied in \cite{LSS14, DSS94}}
	\end{subfigure}
	\begin{subfigure}{0.5\textwidth}
		\centering
		\includegraphics[width=0.4\linewidth]{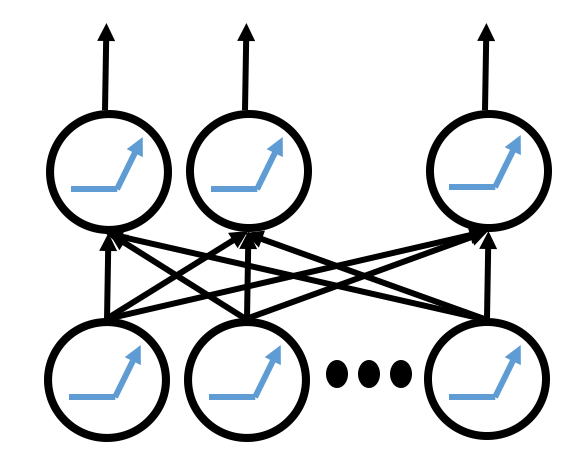}
		\caption{Fully connected ReLU}
	\end{subfigure}
	\caption{Difference between ReLU model studied in \cite{LSS14, DSS94} and typical fully connected counterpart}
	\label{livni_vs_full}
\end{figure}

Another line of research on the hardness of training ReLU neural networks assumes that the data is coming from some distribution. More recent works in this direction include \cite{S16} which shows a smooth family of functions for which the gradient of squared error function is not informative while training neural network over Gaussian input distribution. Another study in this line of work considers Statistical Query (SQ) framework \cite{SVWX17} (which contains SGD algorithms) and shows that there exists a class of  special functions generated by single hidden layer neural network for which learning will require exponential number of queries (i.e. sample gradient evaluations) for the data coming from the product measure of the real valued log-concave distribution. These are interesting studies in their own right and generally consider hardness with respect to the algorithms that use stochastic gradient queries and require that such algorithm must perform minimization of the (expectation) objective functions. In comparison, we consider the framework of NP-hardness which takes into account the complete class of the polynomial time algorithms, generally assumes that the data is given and requires an optimal solution to the corresponding empirical objective.

Recently, \cite{ABMM16} showed that a single hidden layer ReLU network can be trained in polynomial time when dimension of input, $w_0$, is constant. 

Based on the above discussion, we see that the status of the complexity of training the multi-layer fully-connected ReLU neural network remains open. Given the importance of  the ReLU NN, this is an important question. In this paper, we take the first steps in resolving this question.

\paragraph{Main Contributions}

\begin{itemize}
	\item NP-hardness: We show that the training problem for a simple two hidden layer fully-connected NN which has two nodes in the first layer, one node in the second layer and ReLU activation function at all nodes is NP-hard (Theorem \ref{thm:nphard}). Underlying graph of this network is exactly the same as that in Blum et al. \cite{BR88} but all activation functions are ReLU instead of threshold function.
	%even a simple two hidden layer fully connected ReLU neural network is NP-hard. Underlying graph of our network is similar to one proposed by \cite{BR88} but its representation power is distinct as it is ReLU network as opposed to binary network. 
	%The proof involves understanding representation power of the network and exploiting that to get required structure which will be equivalent to solving an NP-complete problem. 
	Techniques used in the proof are different from earlier work in the literature because there is no combinatorial interpretation to ReLU as opposed to the threshold function.
	\item Polynomial-time solvable cases: We present two cases where the training problem with ReLU activation function can be solved in polynomial-time. The first case is when the dimension of the input is fixed (Theorem \ref{lemma_poly_d}). This result generalizes the result from \cite{ABMM16} and uses the hyperplane arrangement theorem for its proof.

We also observe that when the number of nodes in the first layer of the network is equal to the number of input data points (Proposition \ref{lemma_N_node}) then there exists a polynomial time algorithm. The proof of this fact follows from a simple observation that reduces the problem to fitting a single hidden layer neural network and then applying the polynomial time algorithm result for single hidden layer neural network in the work of \cite{CSMBO16}. This is the highly over-parameterized neural network setting. This result leads to some interesting open questions that we discuss later. 
\end{itemize}

\section{Notation and Definitions}
%\subsection{Notation}
%Throughout this paper, we denote the ReLU function by $[\cdot]_+ := \max(\cdot,0)$, and the threshold function by $\sgn(x) := \begin{cases}
%1 &\text{if }x >0\\ -1 &\text{if }x<0 
%\end{cases}$. The threshold function is undefined at input value 0. 

We use the following standard set notation $[n]:=\{1,\dots, n\}$. %Let $a(x) = c_1^Tx + c_2$ be an affine function, then we denote $a$ as $(c_1, c_2)$ wherever such a notation is necessary. For any scalar $\alpha$, we naturally denote affine function $\alpha a$ as $(\alpha c_1, \alpha c_2)$. 
The letter $d$ generally denotes the dimension of input data, $N$ denotes the number of data-points and unless explicitly specified, the output data is one dimensional.

The main training problem of interest for the paper corresponds to a neural network with 3 nodes. The underlying graph is a layered directed graph with two layers. The first layer contains two nodes and the second layer contains one node. The network is fully connected feedforward network. One can write the function corresponding to this neural network as follows:
\begin{equation}\label{2relu_nn_def}
F(x) = %\deleted{\theta}
\relu{w_0+ w_1\relu{a_1(x)} + w_2\relu{a_2(x)}},
\end{equation}
\begin{figure}[h]
	\centering
	\includegraphics[scale = 0.3]{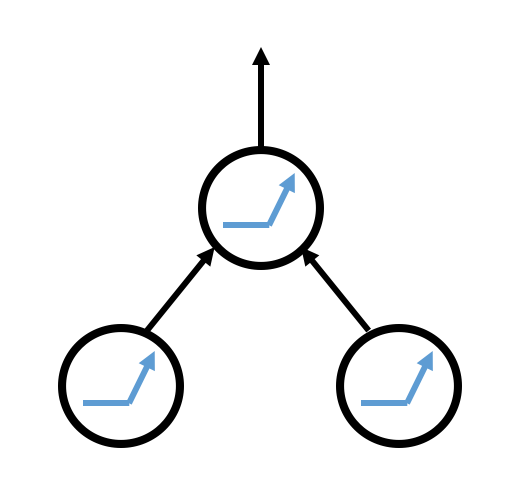}
	\caption{(2,1)-ReLU Neural Network. Also called 2-ReLU NN after dropping `1'. Here ReLU function is presented in each node to specify the type of activation function at the output of each node.}
	\label{2Relu}
\end{figure}
where $a_i: \mathbb{R}^d \rightarrow \mathbb{R}$ for $i \in \{1, 2\}$ are real valued affine functions, and $w_0, w_1, w_2 \in \mathbb{R}$. The output of the two affine maps $a_1, a_2$ are the inputs to the two ReLU nodes in first hidden layer of network. The weights $\{w_0,w_1,w_2\}$ denote affine map for ReLU node in second layer. %\deleted{The coefficient $\theta \in \mathbb{R}$ is the linear map of the output layer.} 

We refer to the network defined in (\ref{2relu_nn_def}) as (2,1)-ReLU Neural Network(NN). As its name suggests, it has 2 ReLU nodes in first layer and 1 ReLU node in second layer. We will refer to $(k,j)$-ReLU NN as a generalization of $(2,1)$-ReLU NN where there are $k$ ReLU nodes in first layer and $j$ ReLU nodes in second layer. Note that the output of $(k,j)$-ReLU NN lies in $\Rbb^j$. If there is only one node in the second layer, we will often drop the $``1"$ and refer it as a 2-ReLU NN or k-ReLU NN depending on whether there are $2$ or $k$ nodes in the first layer, respectively.
Figure \ref{2Relu} shows $2$-ReLU NN. 
\begin{observation}\label{obs:1}
Note that 
\[w[ax+b]_+ \equiv \sgn(w)[|w|(ax+b)]_+ = \sgn(w)[\tilde{a}x + \tilde{b}],\] so without loss of generality we will assume $%\deleted{\theta,} 
w_1, w_2 \in \{-1,1\}$ in (\ref{2relu_nn_def}).
\end{observation}
Now we formally state the definition of the decision version of the training problem.
\begin{definition}[Decision-version of the  training problem]\label{defn:train} Given a set of training data $(x^i, y^i) \in \mathbb{R}^d \times \{1, 0\}$ for $i \in S$, do there exist edge weights so that the resulting function $F$ satisfies $F(x^i) = y^i$ for \added{all} $i \in S$.
\end{definition}
The decision version of the training problem in Definition \ref{defn:train} is asking if it is possible to find edge weights to obtain zero loss function value in the expression (\ref{eq:obj}), assuming $l$ is a norm, i.e., $l(a,b) = 0$ iff $a = b$. 

\section{Main Results}

\begin{theorem}\label{thm:nphard}
	It is NP-hard to solve the training problem for 2-ReLU NN.
\end{theorem}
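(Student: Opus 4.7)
The plan is to give a polynomial reduction from the \emph{$2$-affine separability} problem, shown NP-hard by Megiddo~\cite{megiddo88} and used for the threshold-activation analog by Blum and Rivest~\cite{BR88}: given disjoint finite sets $P, N \subseteq \mathbb{Q}^d$, decide whether there exist affine functions $\ell_1, \ell_2 \colon \mathbb{R}^d \to \mathbb{R}$ such that $\ell_1(p) \leq 0$ and $\ell_2(p) \leq 0$ for every $p \in P$ and, for every $n \in N$, $\ell_1(n) > 0$ or $\ell_2(n) > 0$. By standard padding tricks I would further assume that $|P| > 4d$, that $P$ is in general position in $\mathbb{R}^d$, and that $P$ cannot be separated from $N$ by a single hyperplane---indeed, that this non-separability is robust to deleting any $d$ points of $P$. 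All these conditions preserve NP-hardness.

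The reduction uses the same points as training data, labeling each $p \in P$ with $y = 1$ and each $n \in N$ with $y = 0$. The forward direction is immediate: take $w_1 = w_2 = -1$, $w_0 = \theta = 1$, and $a_i := M\ell_i$ with the scalar $M$ chosen large enough that $M\ell_j(n) \geq 1$ for the coordinate $j$ violated at each $n \in N$. Then $z(p) = 1 - 0 - 0 = 1$ gives $F(p) = 1$, while $z(n) \leq 1 - M\ell_j(n) \leq 0$ gives $F(n) = 0$.

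For the reverse direction, let any perfect fit be $F(x) = \theta\relu{z(x)}$ where $z(x) := w_0 + w_1\relu{a_1(x)} + w_2\relu{a_2(x)}$. Matching the labels forces $\theta > 0$, $z(x^i) = 1/\theta$ on every $p \in P$, and $z(x^i) \leq 0$ on every $n \in N$. Now $z$ is piecewise affine on the four polyhedral pieces cut out by the signs of $a_1, a_2$: the piece $\{a_1 \leq 0,\, a_2 \leq 0\}$ is flat with $z \equiv w_0$, and on each other piece $z$ is non-constant affine so that $\{z = 1/\theta\}$ is contained in a hyperplane, carrying at most $d$ points of $P$ in general position. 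A pigeonhole argument on $|P| > 4d$ then forces the flat piece to contain more than $d$ positives, which in turn forces $w_0 = 1/\theta > 0$ and places the bulk of $P$ inside $\{a_1 \leq 0,\, a_2 \leq 0\}$.

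A case analysis on $(w_1, w_2) \in \{\pm 1\}^2$ completes the argument. The case $w_1 = w_2 = +1$ is immediately infeasible because $z \geq w_0 = 1/\theta > 0$ contradicts $z(n) \leq 0$. Each mixed-sign case forces every negative to satisfy $a_i(n) \geq w_0 > 0$ for a single index $i$, while all but at most $d$ positives (those potentially living on the codimension-one level set $\{a_1 = a_2\}$ in the mixed piece) satisfy $a_i \leq 0$; the robust single-hyperplane non-separability assumption then yields a contradiction. Only $w_1 = w_2 = -1$ survives, and there the condition $z(n) \leq 0$ becomes $\relu{a_1(n)} + \relu{a_2(n)} \geq w_0 > 0$, so each $n \in N$ violates $a_1(n) \leq 0$ or $a_2(n) \leq 0$, and $(\ell_1, \ell_2) := (a_1, a_2)$ solves the source instance. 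The main obstacle I anticipate is this case analysis, in particular cleanly excluding the small set of positives that could hide on codimension-one level sets in the mixed-sign pieces; this is precisely why the source must be padded into the robust form described above, and in a more refined construction a few auxiliary anchor training points may be needed to tighten the argument.
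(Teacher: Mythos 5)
Your high-level strategy is the same as the paper's: reduce from 2-affine separability, decompose $z(x)=w_0+w_1\relu{a_1(x)}+w_2\relu{a_2(x)}$ over the four sign-pattern pieces of $(a_1,a_2)$, and force all positives onto the piece $\{a_1\le 0,\,a_2\le 0\}$ so that $(a_1,a_2)$ witnesses separability. The difference is that the paper implements the reverse direction with a fixed gadget of twelve extra labelled points placed in two \emph{appended} coordinates, while you try to replace the gadget by structural assumptions on the source instance, and that replacement is where the proof has genuine gaps. The ``standard padding tricks'' are not standard: perturbing $P$ into general position can turn a no-instance into a yes-instance unless you control the perturbation against the bit-size of a hypothetical separator, and the assumption that $P$ cannot be separated from $N$ by a single hyperplane \emph{even after deleting any $d$ points of $P$} is a property you would have to certify over $\binom{|P|}{d}$ subsets (not polynomial for varying $d$) or engineer by adding points whose correct placement you cannot know without solving the instance. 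The whole point of the paper's gadget is that its twelve points live in two fresh coordinates, so their geometry is fixed, instance-independent, and checkable by a finite case analysis (Propositions \ref{lemma_3-points}--\ref{lemma_no-point} and Lemma \ref{lemma_parity1}).

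The second gap is in the pigeonhole step itself. You assert that on every piece other than $\{a_1\le 0,\,a_2\le 0\}$ the function $z$ is non-constant affine, so its level set $\{z=1/\theta\}$ meets at most $d$ points of $P$ in general position. This fails exactly when $w_1a_1+w_2a_2$ is identically constant (the lines $a_1=0$ and $a_2=0$ parallel with equal gradient norms) or when some $a_i$ is a constant function: then $z$ is constant on a full-dimensional slab such as $\{0\le a_1\le c\}$, that slab can absorb all of $P$, the flat piece need not contain any positive, and your conclusions $w_0=1/\theta>0$ and ``the bulk of $P$ lies in $\{a_1\le0,a_2\le0\}$'' do not follow. (In the slab case the source instance is in fact still a yes-instance, but the witness is the pair $(-a_1,\,a_1-c)$ rather than $(a_1,a_2)$, so your stated extraction is also wrong there.) The paper devotes Proposition \ref{lemma_parallel-lines} to killing precisely this degeneracy on its gadget before any counting argument is made. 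Together with the unresolved issue you yourself flag --- positives hiding on codimension-one level sets inside the mixed-sign pieces --- the reverse direction as proposed is incomplete, and closing it honestly seems to require constructing something equivalent to the paper's gadget rather than assuming the source instance already has the needed robustness.
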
 
%Exact details can be found in appendix section \ref{sec:proof_hard-sort}, \ref{sec:proof_final_lemma} and \ref{sec:proof_lemma_parity1}.
An immediate corollary of Theorem \ref{thm:nphard} is the following:
\begin{corollary}\label{cor:nphard}Training problem of (2,j)-ReLU NN is NP hard, for all $j \ge 1$.
\end{corollary}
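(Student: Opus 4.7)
The plan is to reduce the $(2,1)$-ReLU NN training problem, shown NP-hard in Theorem~\ref{thm:nphard}, to the $(2,j)$-ReLU NN training problem by the natural ``pad with zero output coordinates'' construction. Given a $(2,1)$-NN instance $\{(x^i,y^i)\}_{i\in S}\subseteq\mathbb{R}^d\times\{0,1\}$, I would build the $(2,j)$-NN instance $\{(x^i,\tilde y^i)\}_{i\in S}$ with $\tilde y^i:=(y^i,0,\dots,0)\in\{0,1\}^j$. This transformation is clearly polynomial in the input size.

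For the forward direction, suppose a $(2,1)$-NN $F_0(x)=\theta\relu{w_0+w_1\relu{a_1(x)}+w_2\relu{a_2(x)}}$ fits the scalar targets $y^i$. I would build a $(2,j)$-NN that reuses the same two first-layer affine maps $a_1,a_2$; in the first output coordinate, copies the second-layer weights $w_0,w_1,w_2$ and output coefficient $\theta$ so as to reproduce $F_0$; and in each of the remaining $j-1$ output coordinates, either sets the outgoing scalar coefficient to $0$, or, equivalently, chooses weights and a strictly negative bias at the corresponding second-layer node so its ReLU is identically zero. The resulting $(2,j)$-NN fits $\tilde y^i$ at every $x^i$.

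For the backward direction, given any $(2,j)$-NN $F$ with $F(x^i)=\tilde y^i$, I would simply read off the first output coordinate $F_1$. Under the output-layer convention implicit in the paper, where each of the $j$ second-layer ReLU nodes is multiplied by its own scalar coefficient to feed a single output coordinate, $F_1$ has the form $\theta_1\relu{c_0+c_1\relu{a_1(x)}+c_2\relu{a_2(x)}}$, which is itself a $(2,1)$-ReLU NN satisfying $F_1(x^i)=y^i$ for all $i\in S$, solving the original instance.

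The step I expect to require the most care is pinning down the $(2,j)$-NN architecture sharply enough that the backward extraction yields a $(2,1)$-NN rather than a linear combination of $j$ second-layer ReLUs. Under the diagonal output convention suggested by the paper's notation this is automatic; and even if one allowed a full output matrix, the forward construction can always be taken in diagonal form, so restricting attention to such diagonal layers is without loss of generality for the feasibility question, and the reduction goes through either way.
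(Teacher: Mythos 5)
Your proposal is correct and matches the paper's own argument in all essentials: both pad the output labels with $j-1$ zeros, force the $j-1$ extra second-layer nodes to output zero identically (the paper uses $w_{k,1}=w_{k,2}=-1$, $w_{k,0}=0$) for the forward direction, and recover a solution to the hard instance from the first output coordinate for the backward direction. The only cosmetic difference is that you reduce from the $(2,1)$-ReLU NN training problem directly, whereas the paper phrases the identical construction as an extension of its reduction from 2-affine separability; under the paper's diagonal output convention, which you correctly identify as the intended one, this changes nothing.
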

The proof of Theorem~\ref{thm:nphard} is obtained by reducing the 2-Hyperplane Separability Problem to the training problem of 2-ReLU NN. Details of this reduction and the proof of Theorem \ref{thm:nphard} and Corollary \ref{cor:nphard} are presented in Section~\ref{sec:nphard}. 

%\added{
While this paper was \replaced{in submission}{submitted}, two more studies \cite{manurangsi2018computational, dey2018approximation} considered the computational complexity of training a single ReLU node and proved that it is a NP-hard problem. \cite{manurangsi2018computational} also showed that it is NP-hard to train one hidden layer neural network with two nodes and ReLU activation at each node. This network basically removes the second layer ReLU activation and affine constant $w_0$ in \eqref{2relu_nn_def} so that neural network function of their case can be rewritten as $F(x) = w_1\relu{a_1(x)} + w_2\relu{a_2(x)}$. These are different network architectures and hence hardness of training any one of them does not necessarily imply hardness of training for remaining neural networks.
%}

Megiddo \cite{megiddo88} shows that  the separability with fixed number of hyperplanes (generalization of  2-hyperplane separability problem) can be solved in polynomial-time in fixed dimension. Therefore 2-hyperplane separability problem can be solved in polynomial time given dimension is constant. Based on the reduction used to prove Theorem~\ref{thm:nphard} , a natural question to ask is ``Can one solve the training problem of 2-ReLU NN problem in polynomial time under the same assumption?". We answer this question in the affirmative.
\begin{theorem}\label{lemma_poly_d} Under the assumption that the dimension of input, $d$ and the number of nodes in the first layer, $k$, are constant, then there exists a poly(N)-time solution to the training problem of k-ReLU neural network, where $N$ is the number of data-points.
\end{theorem}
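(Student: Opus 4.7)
My approach is to reduce the training problem to a polynomial family of linear feasibility problems, one per combinatorial ``sign type'' of the two first-layer ReLUs on the data. Write $F(x) = \theta \relu{L(x)}$ with $L(x) := w_0 + w_1 \relu{a_1(x)} + w_2 \relu{a_2(x)}$. If every $y^i = 0$, take $\theta = 0$ trivially. Otherwise some $y^i = 1$, and then $\theta \relu{L(x^i)} = 1$ forces $\theta > 0$ and $L(x^i) = 1/\theta$ to be a common positive constant across all ``one'' points. Because rescaling $(a_1, a_2, w_0) \mapsto \lambda (a_1, a_2, w_0)$ with $\lambda > 0$ leaves every sign $\sgn(a_j(x^i))$ unchanged and scales $L$ linearly, I may WLOG normalize to $\theta = 1$, require $L(x^i) = 1$ for $y^i = 1$, and require $L(x^i) \leq 0$ for $y^i = 0$. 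I also enumerate the four possible values of $(w_1, w_2) \in \{-1, +1\}^2$ permitted by the reduction established in the notation section.

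Next, I enumerate the sign patterns of the two hidden ReLUs on the data. Writing $a_j(x) = v_j \cdot x + b_j$ and regarding $(v_j, b_j)$ as a point in $\mathbb{R}^{d+1}$, each data point $x^i$ defines a hyperplane $\{(v_j, b_j) : v_j \cdot x^i + b_j = 0\}$. By the standard bound on arrangements, these $N$ hyperplanes partition $\mathbb{R}^{d+1}$ into at most $\sum_{k=0}^{d+1} \binom{N}{k} = O(N^{d+1})$ full-dimensional cells, each corresponding to a distinct realizable sign vector $\sigma^j \in \{+,-\}^N$ recording the signs of $a_j(x^i)$; moreover, all such cells can be enumerated in time $N^{O(d)}$ by standard arrangement-construction algorithms. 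Doing this independently for $j=1,2$ produces $O(N^{2d+2})$ candidate pairs $(\sigma^1, \sigma^2)$.

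Fix such a pair together with a choice of $(w_1, w_2)$. Under the prescribed signs, $\relu{a_j(x^i)}$ equals $a_j(x^i)$ when $\sigma^j_i = +$ and equals $0$ when $\sigma^j_i = -$ (the two readings agree at the boundary $a_j(x^i)=0$), so $L(x^i)$ becomes an explicit \emph{linear} functional of the $2d+3$ unknowns $(v_1, b_1, v_2, b_2, w_0)$. The normalized training problem reduces to a linear feasibility test with the inequalities $\sigma^j_i (v_j \cdot x^i + b_j) \geq 0$ for all $i, j$, the equalities $L(x^i) = 1$ for each $i$ with $y^i = 1$, and the inequalities $L(x^i) \leq 0$ for each $i$ with $y^i = 0$. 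This LP has $O(d)$ variables and $O(N)$ constraints and is solvable in $\mathrm{poly}(N, d)$ time. A valid $(2,1)$-ReLU fit exists iff at least one of the $O(N^{2d+2})$ LPs across all sign pattern / sign weight choices is feasible, so the total running time is $N^{O(d)}$, which is polynomial in $N$ for any fixed $d$.

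The main subtle point I expect is confirming that working with closed-cell sign patterns loses no solutions and that recovered weights reproduce $F(x^i)$ even when some $a_j(x^i) = 0$. The latter is immediate because $\relu{0}=0$ under either sign convention, so both $L$ and $F$ are computed correctly from any LP-feasible point; everything else (the normalization step, the arrangement bound, and the final LP solve) is routine once the sign-pattern decomposition is in place.
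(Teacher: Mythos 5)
Your argument is correct, and its first half --- enumerating the polynomially many realizable sign patterns of the two first-layer affine maps on the data via a hyperplane arrangement in the $(d+1)$-dimensional parameter space, then solving one convex program per pattern --- is exactly the strategy of the paper's proof in Appendix~\ref{sec:poly1} (the paper invokes the Edelsbrunner et al.\ enumeration of linearly separable partitions $Q_1/\ol{Q}_1$, $Q_2/\ol{Q}_2$, which is the same $N^{O(d)}$ count in primal form). Where you genuinely depart is in the treatment of the second layer. The paper poses the problem as squared-loss minimization: it must therefore additionally enumerate the activation pattern of the \emph{outer} ReLU on each of the sets $T_1,T_2,T_3$ (an extra $O(N^{3d})$ factor), split on the sign of $w_0$, and solve a linearly constrained convex quadratic program in each cell, arriving at $O(N^{5d})$ convex programs; in exchange it returns a global minimizer of the empirical loss and works for arbitrary real labels. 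You instead exploit the $0/1$ labels of Definition~\ref{defn:train}: a positive rescaling of $(a_1,a_2,w_0)$ lets you fix $\theta=1$ and demand $L(x^i)=1$ on one-points and $L(x^i)\le 0$ on zero-points, so the outer ReLU's activation is dictated by the label and each cell collapses to a single LP feasibility test, with only $O(N^{2d+2})$ LPs overall. Your normalization and your handling of degenerate cases (closed cells, the fact that $\relu{0}=0$ under either sign convention, the all-zero-label case) are sound, so the proof is complete for the decision version; the only caveat is that if one reads ``training problem'' as minimizing the loss (\ref{eq:obj}) over real-valued labels --- which is what the paper's proof actually delivers --- your reduction to LP feasibility no longer applies as stated, and you would need the paper's extra enumeration over the outer ReLU's sign together with a quadratic objective in each cell.
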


The high-level idea of the proof is the following: each data point ``passes through" the three ReLU nodes and the activation function in these nodes is ``turned on" or ``turned off" (i.e., the output is $0$ or not). We will enumerate all possible combinations of the data points being turned on or not, which we show is poly(N) assuming $d$ and $k$ is fixed (by use of the Hyperplane Arrangement Theorem). Then we show that for each of these combinations and for each possible sign pattern of the weights defining the affine function applied at the second layer, corresponding optimal affine functions can be calculated via solving one convex program of poly size. Finally,  we select the best optimal affine function which minimizes the loss function. %\added{
Technique of Hyperplane Arrangement Theorem to enumerate partition was used in \cite{ABMM16} for proving poly(N)-time algorithms for single hidden layer neural networks. We extend this result for $k$-ReLU neural network which is a two hidden layer network. The complication due to second layer ReLU node are handled by solving a convex program of poly size. We present the formal proof of Theorem \ref{lemma_poly_d} in Appendix \ref{apx_lemma_poly_d}%}
.

%This theorem suggests that hardness of learning is due to high dimension and topology of network. As long as $d$ and $k$ is small, we can find reasonably good algorithms (for practical purposes) for large values of $N$.

We also study this problem under over-parameterization. Structural understanding of 2-ReLU NN yields an easy algorithm to solve training problem for N-ReLU neural network over N data points. In fact, the problem can be easily reduced to a single hidden layer NN.

%Moreover, proof of the algorithm shows that requirement of N-nodes in first hidden layer is just because of a worst case example. If there is a pattern to the data then this number can be significantly reduced. 
\begin{proposition} \label{lemma_N_node} Given data, $\{x^i, y^i\}_{i \in [N]}$ (where we assume that $x^i$s are distinct), then the training problem for $N$-ReLU NN has a poly(N,d)-time randomized algorithm, where $N$ is the number of data-points and $d$ is the dimension of input.
\end{proposition}
The proof of this proposition is by first reducing the problem to that of training a single hidden layer network with $N$ nodes on a dataset of size $N$. Then a polynomial time algorithm from \cite{CSMBO16} is applied for interpolating the data. The precise details are presented in Appendix \ref{apx_lemma_N_node}.

\section{Training 2-ReLU NN is NP-hard}\label{sec:nphard}
In this section, we give details about the NP-hardness reduction for the training problem of 2-ReLU NN.  We begin with the formal definition of 2-Hyperplane Separability Problem.
\begin{definition}[2-Hyperplane Separability Problem]\label{2-aff-def} Given a set of points $\{x^i\}_{i \in [N]} \in \Rbb^d$ and a partition of $[N]$ into two sets: $S_1,S_0$, (i.e., $S_1 \cap S_0 = \emptyset$, $S_1 \cup S_0 = [N]$) 
	%and a partition of set $[N]$,
	%say $S_1/S_0$, 
	decide whether there exists two hyperplanes $H_1 = \{x: \alpha_1^Tx + \beta_1 =0\}$ and $H_2 = \{x: \alpha_2^Tx+\beta_2 = 0\}$ where $\alpha_1, \alpha_2 \in \Rbb^d$ and $\beta_1, \beta_2 \in \Rbb$ that separate the set of points in the following fashion:
	%sense expressed by following condition:
	\begin{enumerate}[i]
		\item \label{part1} For each point $x^i$ such that $i \in S_1$, both $\alpha_1^Tx^i + \beta_1 > 0$ and $\alpha_2^Tx^i+\beta_2 > 0$.
		\item \label{part2}For each point $x^i$ such that $i \in S_0$, %\deleted{either} 
		$\alpha_1^Tx^i + \beta_1 < 0$ or $\alpha_2^Tx^i+\beta_2 < 0$.
	\end{enumerate}
\end{definition}
The 2-hyperplane separability problem is NP-complete \cite{megiddo88}. Note the difference between conditions \ref{part1} and \ref{part2} above. First one is an ``AND" statement and second is an ``OR" statement. Geometrically, solving 2-hyperplane separability problem means that finding two affine hyperplanes 
%specified by normal vector and bias tuple 
$\{\alpha_1,\beta_1\}$ and $\{\alpha_2,\beta_2\}$ such that all points in set $S_1$ lie in one quadrant formed by two hyperplanes and all points in set $S_0$ lie outside that quadrant. Due to this geometric intuition, the problem is called 2-hyperplane separability. We will construct a polynomial reduction from this NP-complete problem to
% specific instance of 
training 2-ReLU NN, which will prove that training 2-ReLU NN is NP-hard.

\begin{remark} [Variants of 2-hyperplane separability]\label{rem:variants}
Note here that some sources also define 2-hyperplane separability problem with a minor difference. This difference is that strict inequalities, `$>$', in Definition \ref{2-aff-def}.\ref{part1} are diluted to inequalities, `$\ge$'. In fact, these two problems are equivalent in the sense that there is a solution for the first problem if and only if there is a solution for the second problem. Solution for the first problem implies solution for the second problem trivially. Suppose there is a solution for the second problem, that implies there exist $\{\alpha_1,\beta_1\}$ and $\{\alpha_2, \beta_2\}$ such that for all $i \in S_0$ we have either $\alpha_1^Tx^i + \beta_1 < 0$ or $\alpha_2^Tx^i+\beta_2 < 0$. This implies $\epsilon := \min\limits_{i \in S_0} \max\{-\alpha_1x^i -\beta_1, -\alpha_2x^i -\beta_2\} > 0$. 
%Essentially, $\eps$ measures minimum of distances of each point in $S_0$ to either of the hyperplanes. 
So if we shift both planes by $\frac{1}{2}\epsilon$ i.e. $\beta_i \leftarrow \beta_i+\frac{1}{2}\epsilon$ then this is a solution to the first problem.\end{remark}

\paragraph{Assumption: $\textbf{0} \in S_1$} (Here $\textbf{0} \in \Rbb^d$ is a vector of zeros.) Suppose we are given a generic instance of 2-hyperplane separability problem with data-points $\{x^i\}_{i \in [N]}$ from $\Rbb^d$ and partition $S_1$ and $S_0$ of the set $[N]$. Since the answer of 2-hyperplane separability instance is invariant under coordinate translation, we can shift the origin to any $x^i$ for $i \in S_1$, and therefore assume that the origin belongs to $S_1$ henceforth.

\subsection{Reduction}
Now we create a particular instance for 2-ReLU NN problem from a general instance of 2-hyperplane separability. We add two new dimensions to each data-point $x^i$. We also create a label, $y^i$, for each data-point. Moreover, we add a constant number of extra points to the training problem. Exact details are as follows:\\
Consider training set $\{(x^i, 0,0), y^i\}_{i\in [N]}$ where $y^i  = \begin{cases}
1 &\text{if } i \in S_1\\ 0 &\text{if } i \in S_0
\end{cases}$.

Add additional \replaced{18}{12} data points to the above training set as follows:\\
$\{p_1\equiv \{(\textbf{0},1,1),1\}, p_2 \equiv \{(\textbf{0},1.5, 0.75),1\}, p_3 \equiv \{(\textbf{0},2,1),1\}, p_4 \equiv \{(\textbf{0},2.25,1.5),1\},$ \\
$p_5~\equiv~\{(\textbf{0},2,2),1\}, p_6 \equiv \{(\textbf{0},1.5,2.25),1\}, p_7 \equiv \{(\textbf{0},1,2),1\}, p_8 \equiv \{(\textbf{0},0.75,1.5),1\},$\\
$p_{9} \equiv \{(\textbf{0},0,-1),0\}, p_{10} \equiv \{(\textbf{0},1,-1),0\}, p_{11} \equiv \{(\textbf{0},2,-1),0\},$\\
$p_{12} \equiv \{(\textbf{0},3,-1),0\}, p_{13} \equiv \{(\textbf{0},5,-1),0\},$\\
$ p_{14} \equiv \{(\textbf{0},-1,0),0\}, p_{15} \equiv \{(\textbf{0},-1,1),0\}, p_{16} \equiv \{(\textbf{0},-1,2),0\},$\\
$ p_{17} \equiv \{(\textbf{0},-1,3),0\}, p_{18} \equiv \{(\textbf{0},-1,5),0\}\}.$\\%additional brace is to cover the initial brace in first line.
We call the set of additional data points with label $1$ as $T_1$ and the set of additional data points with label $0$ as $T_0$. These additional data points (we refer to these points as the
\begin{figure}[h]
	\centering
	\includegraphics[scale = 0.37]{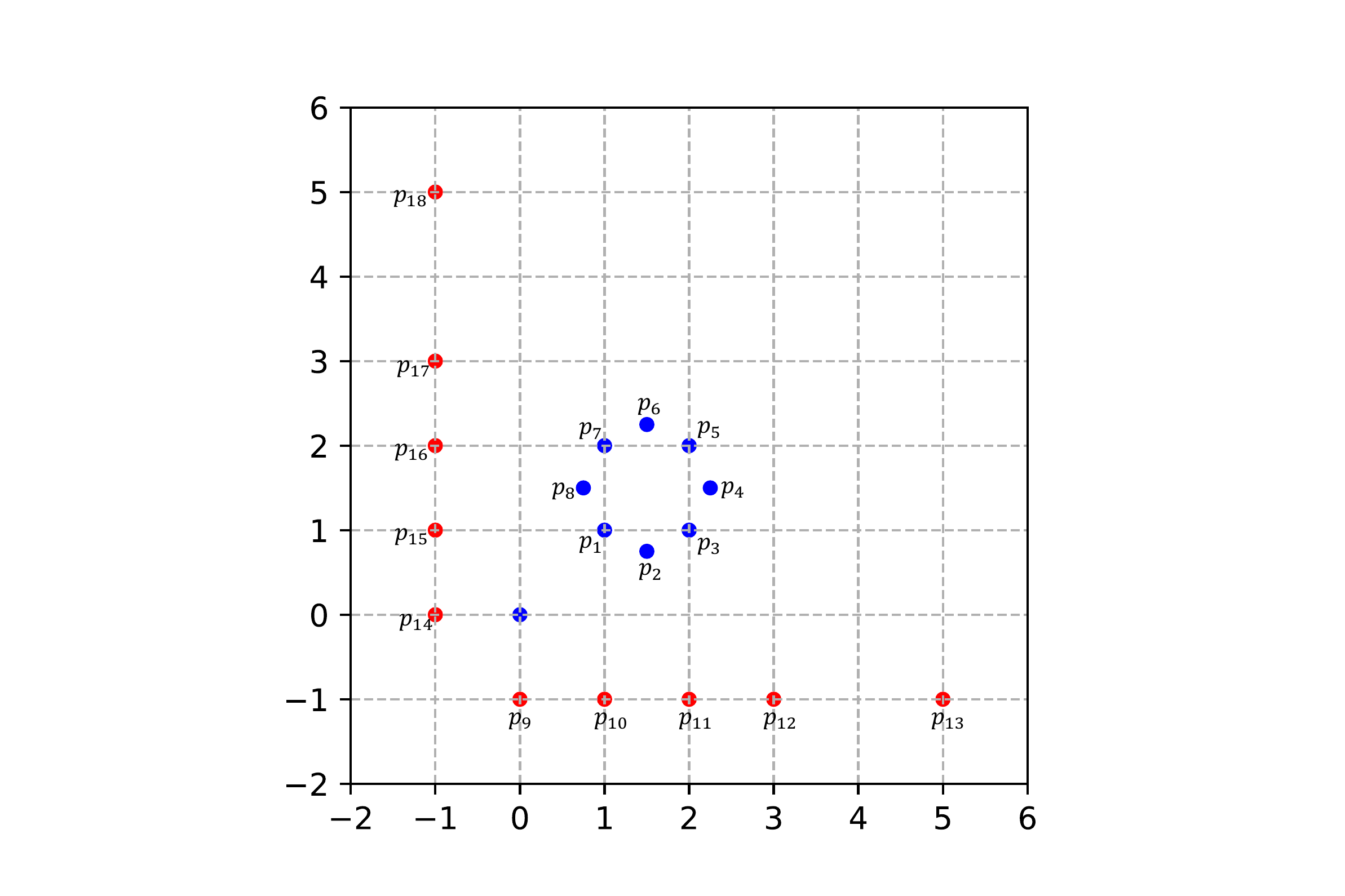}
	\caption{Gadget: Blue points represent set $T_1$ and red points represent set $T_0$. %Note that points $p_{17}, p_{18}$ are omitted in the above figure for the sake of best use of space.
	}
	\label{gadget}
\end{figure}
``gadget points") are of fixed size. So this is a polynomial time reduction.

Figure \ref{gadget} shows the gadget points. Note that origin is added to the gadget because there exists $i \in S_1$ such that $x^i = \textbf{0}$. Hence training set has the data-point $\{(\textbf{0}, 0,0), 1\}$.

Henceforth we refer to the training problem of fitting 2-ReLU NN to this data as (\textbf{P}). In the context of the training problem (\textbf{P}), we abuse the notation and call the set of points $(x^i, 0, 0)$ with label $1$ as $S_1$ and the set of points $(x^i, 0, 0)$ with label $0$ as $S_0$. In particular, there is a direct correspondence between the sets $S_1, S_0$ defined in 2-hyperplane separability problem and sets $S_1, S_0$ defined for 2-ReLU NN training problem (\textbf{P}). Use of our notation is generally clear from the context.

Now what remains is to show that the general instance of 2-hyperplane separability has a solution if and only if the constructed instance of 2-ReLU NN has a solution. In order to understand our approach %\replaced{
better%}{to prove this direction}
, we introduce the notion of ``hard-sorting". Hard-sorting is formally defined below, and its significance is stated in Lemma \ref{lemma_hard-sort}.
%importance for neural network training problem which is formally stated in the Lemma \ref{lemma_hard-sort}.
%need to understand a new concept called ``hard-sorting" which is defined below and its importance for neural network training problem which is formally stated in the Lemma \ref{lemma_hard-sort}.
\begin{definition}[Hard-sorting] \label{def:hard-sort}
	We say that a set of points $\{\pi^i\}_{i \in S}$, partitioned into two sets $\Pi_0, \Pi_1$ can be {hard-sorted with respect to $\Pi_1$} if there exist two affine transformations $l_1,l_2$ and scalars $w_1, w_2, c$ such that %\deleted{either one of} 
	the following %\deleted{two} 
	condition %\deleted{s} \replaced{
	is %}{are} 
	satisfied: 
	%	\begin{itemize}
	%		\item[1.]$w_1\relu{l_1(\pi)} + w_2\relu{l_2(\pi)}  \begin{cases}
	%		= c &\text{for all } \pi \in \Pi_1 \\ > c &\text{for all } \pi \in \Pi_2
	%		\end{cases}$
	%		\item[2.]$w_1\relu{l_1(\pi)} + w_2\relu{l_2(\pi)}  \begin{cases}
	%		= c &\text{for all } \pi \in \Pi_1 \\ < c &\text{for all } \pi \in \Pi_2
	%		\end{cases}$
	%	\end{itemize}
	\begin{equation}\label{eq:hard-sort}
	w_1\relu{l_1(\pi)} + w_2\relu{l_2(\pi)}  
	\begin{cases}
	= c &\text{for all } \pi \in \Pi_1 \\ 
	< c &\text{for all } \pi \in \Pi_0 
	\end{cases}
	\end{equation}
\end{definition}
Being able to hard-sort implies that after passing the data through two nodes of the first hidden layer, the scalar input to the second hidden layer node must have a separation of the data-points in $\Pi_1$ and the data-points in $\Pi_0$. Moreover, scalar input corresponding to all data points in $\Pi_1$ must be equal. 
\begin{remark} \label{parity_invariant_remark}If there exists scalars $w_1, w_2, c$ and affine transformations $l_1, l_2$ such that \[w_1\relu{l_1(\pi)} + w_2\relu{l_2(\pi)}  \begin{cases}
	 = c &\text{for all }\pi \in \Pi_1;\\
	 > c &\text{for all }\pi \in \Pi_0,
	\end{cases}\] then $-w_1,-w_2, -c, l_1, l_2$ satisfy condition \eqref{eq:hard-sort} of hard-sorting.
\end{remark}
\begin{remark} \label{subset_prop_hard-sort_remark} Let $\ol{\Pi}_0 \subset \Pi_0$ and $\ol{\Pi}_1 \subset \Pi_1$. Then hard-sorting of $\Pi_0 \cup \Pi_1$ with respect to $\Pi_1 \Rightarrow$ hard-sorting of $\ol{\Pi}_0 \cup \ol{\Pi}_1$ with respect to $\ol{\Pi}_1$.
\end{remark}
\begin{remark}\label{rem:w_pm1}
	Without loss of generality, we may assume that $w_1, w_2 \in  \{-1,1\}$.
\end{remark}
It is not difficult to see that hard-sorting implies (\textbf{P}) has a solution. We show that hard-sorting is also required for solving training problem. This is formally stated in lemma below.
\begin{lemma}\label{lemma_hard-sort} The 2-ReLU NN training problem (\textbf{P}) has a solution if and only if data-points $S_1 \cup T_1 \cup S_0 \cup T_0$ are hard-sorted with respect to $S_1 \cup T_1$. \end{lemma}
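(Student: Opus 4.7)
The plan is to unfold the 2-ReLU NN output at each data point and recognize that, once we fix the outer weight $\theta$ and bias $w_0$, the remaining content of the equations $F(x^i)=y^i$ is exactly the hard-sorting condition on the inner layer.

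For the ``if'' direction, I would start from a hard-sorting witness $(l_1,l_2,w_1,w_2,c)$ for $S_1\cup T_1\cup S_0\cup T_0$ with respect to $S_1\cup T_1$. By Remark \ref{parity_invariant_remark} I may assume condition~2 of the definition holds (otherwise flip the signs of $w_1,w_2$ and replace $c$ by $-c$), so $w_1[l_1(\pi)]_+ + w_2[l_2(\pi)]_+ = c$ on $S_1\cup T_1$ and is strictly less than $c$ on $S_0\cup T_0$. Set $\delta := c - \max_{\pi\in S_0\cup T_0}\bigl(w_1[l_1(\pi)]_+ + w_2[l_2(\pi)]_+\bigr)>0$, and define the NN by $a_i:=l_i$, the scalars $w_1,w_2$ as above, $w_0 := -c + \delta/2$, and $\theta := 2/\delta$. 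Then the pre-activation of the outer ReLU equals $\delta/2$ on every point of $S_1\cup T_1$ and is at most $-\delta/2$ on every point of $S_0\cup T_0$, so $F$ outputs $\theta\cdot(\delta/2)=1$ on the former and $\theta\cdot 0 = 0$ on the latter, matching all labels.

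For the ``only if'' direction, suppose $(a_1,a_2,w_0,w_1,w_2,\theta)$ solves (\textbf{P}) and write $z(x) := w_0 + w_1[a_1(x)]_+ + w_2[a_2(x)]_+$. For every $i\in S_1\cup T_1$ we have $\theta[z(x^i)]_+ = 1$, which forces $\theta\neq 0$ and, because $[\cdot]_+\ge 0$, actually $\theta>0$; hence $[z(x^i)]_+ = 1/\theta > 0$ and therefore $z(x^i) = 1/\theta$, the \emph{same} positive constant for every label-$1$ point. For every $i\in S_0\cup T_0$ the equation $\theta[z(x^i)]_+=0$ gives $z(x^i)\le 0$. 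Rearranging by subtracting $w_0$, we see that $w_1[a_1(x^i)]_+ + w_2[a_2(x^i)]_+$ equals the constant $c := 1/\theta - w_0$ on $S_1\cup T_1$ and is at most $-w_0 < c$ on $S_0\cup T_0$, which is exactly condition~2 of hard-sorting with $l_i := a_i$.

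I do not expect any real obstacle here: the argument is bookkeeping rather than combinatorics, and the only delicate point is observing that the nonnegativity of ReLU, together with the presence of label-$1$ points, forces $\theta>0$ and hence pins $z$ to a single positive constant on $S_1\cup T_1$. This rigidity on the label-$1$ side is what makes the equivalence with hard-sorting (as opposed to mere separation) an equivalence rather than an implication in one direction only.
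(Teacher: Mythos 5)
Your proposal is correct and takes essentially the same approach as the paper. The forward direction constructs the same margin-shifted network (the paper's $\varepsilon/2$ offset plays the role of your $\delta/2$), and your direct ``only if'' argument is just the contrapositive form of the paper's case analysis, resting on the identical observation that $\theta>0$ forces the pre-activation of the outer ReLU to equal a single positive constant on every label-$1$ point and to be nonpositive on every label-$0$ point.
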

The proof of Lemma \ref{lemma_hard-sort} can be found in  Appendix~\ref{sec:proof_hard-sort} . Figure \ref{hard_sort} presents a geometric interpretation of Lemma \ref{lemma_hard-sort}
\begin{figure}[H]
	\begin{subfigure}[t]{0.32\textwidth}
		\includegraphics[width = 0.95\linewidth]{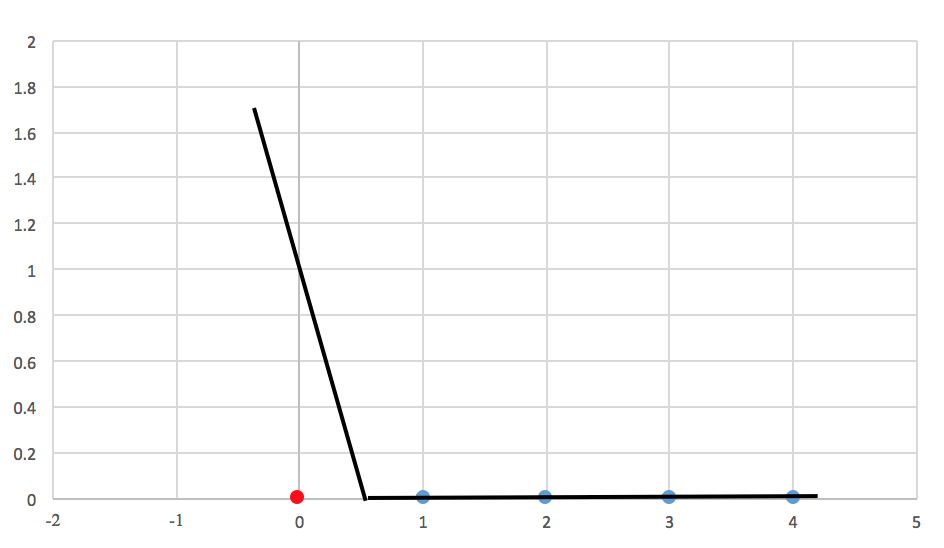}
		\caption{Input is hard-sorted. This can give a perfect fit.}
	\end{subfigure}\hfill
	\begin{subfigure}[t]{0.32\textwidth}
		\includegraphics[width = 0.95\linewidth]{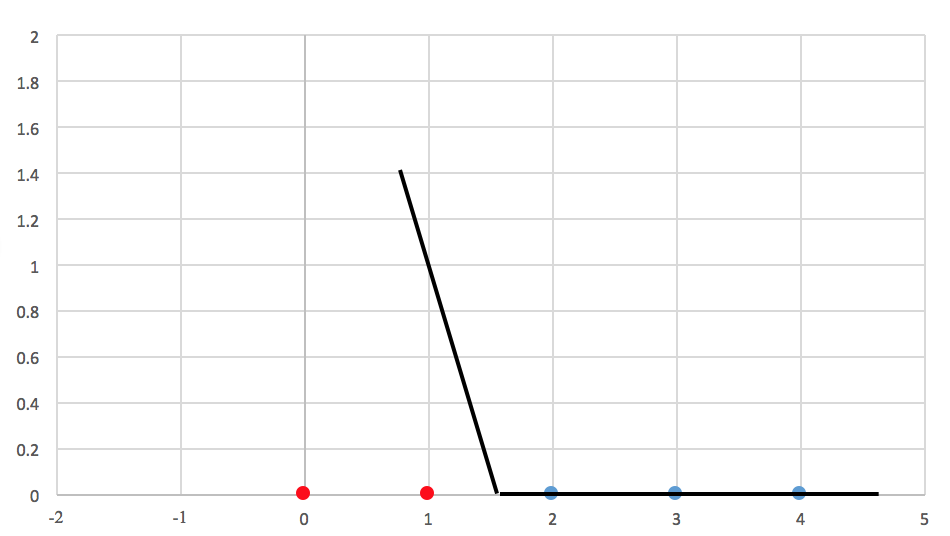}
		\caption{Since there are two red points so input is not hard-sorted. This cannot give a perfect fit.}
	\end{subfigure}\hfill
	\begin{subfigure}[t]{0.32\textwidth}
		\includegraphics[width = 0.95\linewidth]{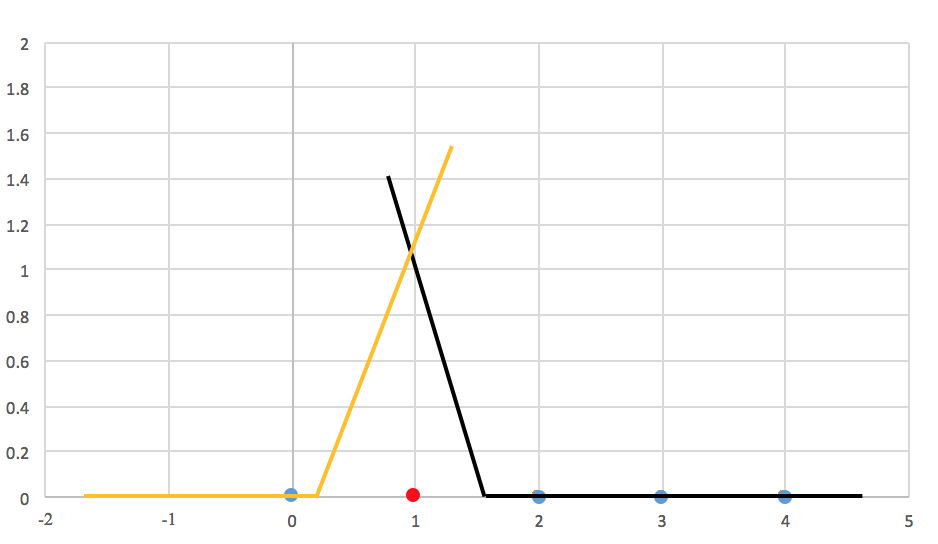}
		\caption{Since blue points lies on different side of red points so input is not hard-sorted. This cannot give a perfect fit.}
	\end{subfigure}
	\caption{X-axis in figures above is output of the first layer of 2-ReLU NN i.e. $w_1\relu{l_1(\pi)}+w_2\relu{l_2(\pi)}$. Y-axis is the output of second hidden layer node. Since output of first hidden layer goes to input of second hidden layer, we are essentially trying to fit ReLU node of second hidden layer. In particular, red and blue dots represent output of first hidden layer on data points with label 1 and 0 respectively. In Fig. (a) we see that hard-sorted input can be classified as $0/1$ by a ReLU function. In Fig. (b) and Fig. (c) we see that input which is not hard-sorted cannot be classified exactly as $0/1$ by a ReLU function.}
	\label{hard_sort}
\end{figure}
%\subsection{2-hyperplane separability problem has a solution $\Rightarrow$ (\textbf{P}) has a solution}

We use the hard-sorting characterization of the solution of the training problem (\textbf{P}) extensively. We first show the forward direction of the reduction in the lemma below. This is also the easier direction.
\begin{lemma}\label{forward_direction}
	If 2-hyperplane separability problem has a solution, then problem (\textbf{P}) has a solution.
\end{lemma}
The proof of Lemma \ref{forward_direction} can be found in Appendix~\ref{sec:proof_forward_direction}.\\
%\subsection{ (\textbf{P}) has a solution $\Rightarrow$2-hyperplane separability problem has a solution }
%The proof of Lemma \ref{forward_direction} can be found in appendix section \ref{sec:proof_forward_direction}.\\
To prove \added{the} reverse direction we need to show that if a set of weights solve the training problem (\textbf{P}) then we can generate a solution to the 2-hyperplane separability problem. 
In the rest of the proof we will argue that the only way to solve the training problem (\textbf{P}) for 2-ReLU NN or equivalently hard-sort data-points is to find two affine function\added{s} $a_1, a_2:\Rbb^{d+2}\to \Rbb$ such that i) $a_1(x) \le 0$ and $a_2(x) \le 0$ for all $x \in S_1 \cup T_1$ and ii) $a_1(x) > 0$ or $a_2(x) > 0$ for all $x \in S_0\cup T_0$. If such a solution exists then there exists a solution to 2-hyperplane separability problem after dropping coefficients of last two dimensions of affine functions $-a_1$ and $-a_2$. %\added{
Note that changing `$<$' to `$\le$' in 2-affine separability problem is valid in view of Remark \ref{rem:variants}.%}

%This is true because $p_i = (x^i,0,0)$ satisfy i) $a_1(p_i) \le 0$ and $a_2(p_i) \le 0$ for all $p_i \in S_1$ and ii) $a_1(p_i) > 0$ or $a_2(p_i) > 0$ for all $p_i \in S_0$. Note that last two coefficients of normal vector, $\nabla a_i$, do not play any role in affine function evaluation so $-a_1$ and $-a_2$, after ignoring last two coefficients, are solution to the 2-hyperplane separability problem.

We will first show that we can hard-sort the gadget points only under the properties of $a_1$ and $a_2$ mentioned above. This implies that a solution to (\textbf{P}) which hard-sorts all points (including the gadget points) must have same properties of $a_1$ and $a_2$. This follows from counter-positive of Remark \ref{subset_prop_hard-sort_remark}, i.e., if a subset of data-points cannot be hard-sorted then all data-points cannot be hard-sorted. Henceforth, we will focus on the gadget data-points (or the last two dimensions of the data). 
\subsubsection{Gadget Points and Hard-Sorting}
In the following lemma, we show a necessary condition on $a_1, a_2$ satisfying hard-sorting of gadget data points $T_1\cup T_0 \cup \{\0\}$ with respect to $T_1\cup \{\0\}$.
\begin{lemma}\label{final_lemma_new}
	If affine functions $a_1, a_2: \Rbb^{d+2} \to \Rbb$ and scalars $w_1, w_2, c$ satisfy hard-sorting of the data-points $T_1 \cup T_0 \cup \{\0\}$ with respect to $T_1 \cup \{\textbf{0}\}$, then all points in $T_1$ must satisfy $a_1(x) \le 0$, $a_2(x) \le 0$. Moreover, we must have $w_1 = w_2 = -1$ and $c = 0$.
\end{lemma}
%\added{
Note that in view of Lemma \ref{final_lemma_new} and counter-positive of Remark \ref{subset_prop_hard-sort_remark}, we have that affine function $a_1, a_2:
\Rbb^{d+2}\to \Rbb$ and scalars $w_1, w_2,c$ satisfying hard-sorting of $S_1\cup T_1 \cup S_0\cup T_0$ with respect to $S_1\cup T_1$ must satisfy %}
\[-\relu{a_1(x)} -\relu{a_2(x)} \begin{cases}
 = 0 &\text{if } x \in S_1;\\
 < 0 &\text{if } x \in S_0
\end{cases} .\]
%\added{
The above condition is equivalent to the requirement that $a_1(x) \le 0,\ a_2(x) \le 0$ for all $x \in S_1$ and $a_1(x) > 0$ or $a_2(x) > 0$ for $x \in S_0$. After dropping the last two dimensions of $-a_1$ and $-a_2$, we obtain the solution for 2-affine separability problem. Now that we have reduced the problem to the key lemma above, the main purpose of this section is to prove Lemma \ref{final_lemma_new}.%}

Note that for each data point in the gadget $T_1 \cup T_0 \added{\cup} \{\0\}$, the first $d$ elements are always $0$. So for the sake of gadget, we may assume that $a_1, a_2: \Rbb^2 \to \Rbb$ and the gadgets lies in $\Rbb^2$. They can be thought of as the projection of the original $a_i:\Rbb^{d+2} \to \Rbb$  and $\textbf{0} \in \Rbb^{d+2}$ to last two dimension which are relevant for gadget data points %\deleted{added according to the set}
$T_1 \cup T_0$. Due to this observation, we assume that $a_1,a_2: \Rbb^2\to \Rbb$ henceforth for this subsection, and provide a proof of Lemma \ref{final_lemma_new} under this assumption.
%\begin{lemma} \label{final_lemma}  
%	Suppose affine functions $a_1, a_2: \Rbb^2 \to \Rbb$ satisfy hard-sorting of the data-points $T_1 \cup T_0 \cup \{\0\}$ with respect to $T_1 \cup \{\textbf{0}\}$ then all points $x \in T_1$ \textbf{must} satisfy $a_1(x) \le 0$ and $a_2(x)\le 0$ with at least one inequality being a strict inequality.
%\end{lemma}

The proof of Lemma \ref{final_lemma_new} is divided into the following sequence of results.
\begin{proposition}\label{prop:simplified}
	Suppose that $a_1, a_2$ satisfy hard-sorting of $T_1\cup T_0$ with respect to $T_1$\replaced{. Then}{ then} there exists $x \in T_1$ such that $a_1(x) \le 0,\ a_2(x) \le 0$.
\end{proposition}
Proof of Proposition \ref{prop:simplified} can be found in Appendix \ref{apx_prop_simplified}.
%Note that Propositions \ref{lemma_3-points} ,  \ref{lemma_2-points} , \ref{lemma_1-point} and \ref{lemma_no-point} imply that given affine functions $a_1, a_2$ hard-sorting $T_1 \cup T_0 \cup \{\textbf{0}\}$ with respect to $T_1\cup \{\textbf{0}\}$, all points $x \in T_1$ must satisfy $a_1(x) \le 0$ and $a_2(x) \le 0$ with at least one of them being strict inequality. It is clear that each $x \in T_1$ must satisfy inequalities $a_1(x) \le 0, a_2(x)\le 0$. At least one of these inequalities has to be strictly negative otherwise we have a contradiction to Proposition \ref{lemma_1-point}.This proves Lemma \ref{final_lemma}.

Next we show one more simple proposition which is critical in proving the final result. The proof of this proposition can be found in Appendix \ref{apx_lemma_parity1}.
\begin{proposition} \label{lemma_parity1} If affine functions $a_1, a_2$ and weights $w_1, w_2$ satisfy hard-sorting of $T_1 \cup T_0 \cup \{\0\}$ with respect to $T_1 \cup \{\0\}$, then $w_1, w_2$ \textbf{must} satisfy $w_1 = w_2 =-1$.
\end{proposition}
%Proof of Proposition \ref{lemma_parity1} can be found in Appendix.
We are now ready to present the prove Lemma \ref{final_lemma_new}.\\
\begin{proof*}{Lemma \ref{final_lemma_new}}
	Since $a_1, a_2$ satisfy hard-sorting of the data points $T_1\cup T_0\cup\{\0\}$ with respect to $T_1\cup \{\0\}$ then, in view of Proposition \ref{prop:simplified} and Proposition \ref{lemma_parity1}, we have 
	\begin{enumerate}
		\item $\exists x \in T_1 $ such that $a_1(x) \le 0, a_2(x) \le 0$.
		\item $w_1 =w_2 = -1$.
	\end{enumerate}
	Then we have that $-\relu{a_1(x)} -\relu{a_2(x)} = 0$ for all $x\in T_1$, due to condition \eqref{eq:hard-sort} of hard-sorting. This implies $a_1(x) \le 0, a_2(x)\le 0$ for all $x \in T_1$. So we conclude the proof.
\end{proof*}
In the next section, we show that this result on the gadget data-points gives us the solution to the original 2-hyperplane separability problem.
\subsubsection{From Gadget Data to Complete Data }
\begin{lemma}\label{reverse_direction} If there is a solution to the problem (\textbf{P}), then there is a solution to corresponding 2-hyperplane separability problem.
\end{lemma}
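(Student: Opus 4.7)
The plan is to combine the gadget lemmas (Lemmas \ref{final_lemma} and \ref{lemma_parity1}) with the equivalence ``solution of (\textbf{P}) $\Leftrightarrow$ hard-sorting'' (Lemma \ref{lemma_hard-sort}) and the restriction property of hard-sorting (Remark \ref{subset_prop_hard-sort_remark}). The gadget has already forced all the geometrically interesting structure on the last two coordinates of $a_1, a_2$; what remains is a bookkeeping step that reads off the 2-affine separators from the first $d$ coordinates.

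Starting from a solution $(a_1, a_2, w_0, w_1, w_2, \theta)$ of (\textbf{P}), Lemma \ref{lemma_hard-sort} gives that $S_1 \cup T_1 \cup S_0 \cup T_0$ is hard-sorted with respect to $S_1 \cup T_1$, using the same $a_k, w_k$. Each original data point has the form $(x^i, 0, 0)$, so on such points the value of $a_k$ depends only on the first $d$ linear coefficients and the constant term of $a_k$, while on gadget points only the last two linear coefficients matter. Using the standing assumption $\0 \in S_1$ (so that $(\0, 0, 0) \in S_1$ projects to $\0 \in \Rbb^2$), Remark \ref{subset_prop_hard-sort_remark} yields that the restriction of $a_1, a_2$ to the last two coordinates hard-sorts $T_1 \cup T_0 \cup \{\0\}$ with respect to $T_1 \cup \{\0\}$. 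Lemma \ref{final_lemma} then forces $a_k(x) \le 0$ for every $x \in T_1$ and $k = 1, 2$, whence $\relu{a_k(x)} = 0$ on $T_1$; this pins the hard-sort constant to $c = 0$. Lemma \ref{lemma_parity1} forces $w_1 = w_2 \in \{-1, +1\}$.

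With $c = 0$ and equal parity, the hard-sort identity on $S_1 \cup T_1$ becomes $\pm\bigl(\relu{a_1(x)} + \relu{a_2(x)}\bigr) = 0$, a sum of two non-negative numbers equal to zero, so $a_1(x) \le 0$ and $a_2(x) \le 0$ for every $x \in S_1 \cup T_1$. On $S_0 \cup T_0$ the hard-sort condition forbids this sum from equaling $0$, so at least one of $a_1(x), a_2(x)$ must be strictly positive. Defining $(\alpha_k, \beta_k)$ as the negation of the first $d$ linear coefficients and the constant term of $a_k$, for $i \in S_1$ we get $\alpha_k^T x^i + \beta_k = -a_k((x^i, 0, 0)) \ge 0$ for both $k$, and for $i \in S_0$ at least one of these quantities is strictly negative. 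This is a solution to the non-strict version of 2-affine separability, which by the opening remark of this section is equivalent to Definition \ref{2-aff-def}. I do not anticipate a real obstacle here: the only subtlety is the same-parity case analysis that decouples the hard-sort identity into two independent inequalities, one per affine function, and once this is done the separators essentially write themselves.
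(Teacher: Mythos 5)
Your proposal is correct and follows essentially the same route as the paper: invoke Lemma \ref{lemma_hard-sort} to get hard-sorting, restrict to the gadget via Remark \ref{subset_prop_hard-sort_remark}, use Lemmas \ref{final_lemma} and \ref{lemma_parity1} to pin $c=0$ and force equal parity, decouple the sum of ReLUs into $a_1 \le 0$ and $a_2 \le 0$ on $S_1 \cup T_1$ and the "or" condition on $S_0 \cup T_0$, and read off $-a_1, -a_2$ as the separators. Your explicit appeal to the strict/non-strict equivalence remark is a small clarification the paper leaves implicit, but the argument is the same.
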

\begin{proof}
	Note that if there is a solution to problem (\textbf{P}), %say $a_1, a_2 \in \Rbb^{d+2}$ and $w_1, w_2, w_0, \theta \in \Rbb$, 
	then by Lemma \ref{lemma_hard-sort}, we must have $a_1, a_2: \Rbb^{d+2}\to \Rbb$ and $w_1, w_2, c$ hard-sorting $S_1\cup T_1 \cup S_0\cup T_0$ with respect to $S_1\cup T_1$. In view of Lemma \ref{final_lemma_new} and counter-positive of Remark \ref{subset_prop_hard-sort_remark}, we have
	\begin{enumerate}
		\item \label{final_part1} $w_1 =  w_2=-1$.
		\item \label{final_part2} $w_1\relu{a_1(x)} + w_2\relu{a_2(x)} = 0$ for all $x \in S_1 \cup T_1$ due to requirement \eqref{eq:hard-sort} of hard-sorting.
	\end{enumerate}
	Since %\deleted{sign of }
	$w_1 =  w_2 = -1$, so \ref{final_part2} above implies $a_1(x) \le 0$ and $a_2(x)\le 0$ for all $x \in S_1\cup T_1$. Moreover, we require $a_1(x) > 0$ or $a_2(x) > 0$ for all $x \in S_0 \cup T_0$ due to condition \eqref{eq:hard-sort} of hard-sorting. Now as discussed earlier, $-a_1, -a_2$ after ignoring coefficients of last two dimensions will yield solution to 2-hyperplane separability problem. Hence we conclude the proof.
\end{proof}
Now we are ready to prove the main NP-hardness theorem.\\
\begin{proof*}{Theorem \ref{thm:nphard}}
	Using Lemma \ref{forward_direction} and Lemma \ref{reverse_direction}, we  conclude the proof.
\end{proof*}
Below we state an immediate corollary of Theorem \ref{thm:nphard} whose proof can be found in Appendix \ref{apx_cor_2j}.
\begin{corollary} \label{cor-2-j}
	Training problem of (2,j)-ReLU NN is NP hard.
\end{corollary}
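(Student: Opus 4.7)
The plan is a straightforward padding reduction from the 2-ReLU NN training problem (NP-hard by Theorem \ref{thm:hardness}) to the $(2,j)$-ReLU NN training problem. Given an instance $\{(x^i, y^i)\}_{i \in [N]} \subset \Rbb^d \times \{0,1\}$ of 2-ReLU NN training, I build a $(2,j)$-ReLU NN training instance on the same inputs with padded vector-valued labels $\tilde y^i := (y^i, 0, \ldots, 0) \in \Rbb^j$. The construction adds $j-1$ zero coordinates per label and is clearly polynomial in the size of the original instance.

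For the forward direction, suppose first-layer maps $a_1, a_2$ and second-layer parameters $(w_0, w_1, w_2, \theta)$ solve the 2-ReLU NN instance. I re-use $a_1, a_2$ in the first hidden layer of the $(2,j)$-ReLU NN, install the tuple $(w_0, w_1, w_2, \theta)$ at the first second-layer node, and set every incoming weight and output scaling of the remaining $j-1$ second-layer nodes to zero. These nodes then produce $\relu{0}=0$ identically, so the total output on each $x^i$ is $(y^i, 0, \ldots, 0) = \tilde y^i$. For the reverse direction, any $(2,j)$-ReLU NN solution consists of first-layer maps $a_1, a_2$ together with, for each $k \in [j]$, a tuple $(w_0^{(k)}, w_1^{(k)}, w_2^{(k)}, \theta^{(k)})$; projecting onto the first output coordinate isolates a 2-ReLU NN that outputs $y^i$ on every $x^i$, hence a solution of the original instance.

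The only delicate point is architectural: the argument uses the natural generalization of equation (\ref{2relu_nn_def}) in which the output layer is a coordinate-wise scaling of the $j$ second-layer ReLU outputs, so that each output coordinate is itself a 2-ReLU NN sharing the common first hidden layer. Under this standard reading of $(2,j)$-ReLU NN, no further combinatorial effort is required beyond invoking Theorem \ref{thm:hardness}, and I do not anticipate a substantive technical obstacle.
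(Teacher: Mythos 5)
Your proposal is correct and matches the paper's own argument: pad the labels with $j-1$ zero coordinates, force the extra second-layer nodes to output identically zero (the paper uses $w_{k,1}=w_{k,2}=-1$, $w_{k,0}=0$ to respect its $\pm 1$ normalization, while you use zero weights, which is equally valid), and recover a 2-ReLU NN solution from the first output coordinate. The only cosmetic difference is that you reduce black-box from the 2-ReLU NN training problem via Theorem \ref{thm:hardness}, whereas the paper phrases it as extending its reduction from 2-affine separability; the underlying construction is the same.
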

\section{Discussion}
We showed that the problem of training $2$-ReLU NN is NP-hard. Given the importance of ReLU activation function in neural networks, in our opinion, this result resolves a significant gap in understanding complexity class of the problem at hand. On the other hand, we show that the problem of training $N$-ReLU NN is in P. So a natural research direction is to understand the complexity status when input layer has more than 2 nodes and strictly less than $N$ nodes. A particularly interesting question in that direction is to generalize the gadget we used for 2-ReLU NN to the case of k-ReLU NN.
%acknowledgments

\subsubsection*{Acknowledgments}
We would like to thank anonymous reviewers whose comments helped in simplifying several results in this paper.
%\appendix 
\begin{appendices}
\section{Proofs of Auxiliary Results}
In this appendix, we provide proof of all auxiliary results.
\subsection{Proof of Theorem \ref{lemma_poly_d}}
\label{apx_lemma_poly_d}
	Suppose we partition the set $[N]$ into sets $Q_j$ and $\ol{Q}_j$ such that all points in $Q_j$ satisfy $a_j(x) \ge 0$ and all points in $\ol{Q}_j$ satisfy $a_j(x)< 0$ for all $j \in [k]$. Given a set $S \subseteq [k]$, we define %$T(S) := \{ \cap_{i \in S}Q_i \} \cap  \{ \cap_{i \in \ol{S}}\ol{Q}_i \}$ 
	$T(S) := \left(\bigcap_{j \in S} Q_j\right) \cap \left(\bigcap_{j \in \bar{S}} \bar{Q}_j\right) %\{ i \in N\,|\, i \in Q_j, \forall j \in S, \  i \in \ol{Q}_j, \forall j \in \ol{S} \} 
	$
	where $\ol{S} = [k]\setminus S$. Let $z = (a_1,\dots, a_k, w_0,w_1,\dots, w_k%\deleted{, \theta})
	$. Then the objective function can be written as 
	\begin{equation*}
	f(z) =\sum\limits_{S \subseteq [k]} \sum\limits_{i \in T(S)}\Big( %\deleted{\theta}
	\relu{w_0 + \sum\limits_{j \in S}w_ja_j(x^i)}-y_i\Big)^2.
	\end{equation*}
	Now we can partition $T(S)$ into sets $T(S)_{1}$ and $T(S)_{2}$ for each $S \subseteq [k] , S \neq \phi$. For $T(S)_{1}$, the ReLU term in the objective, $w_0 + \sum\limits_{j \in S}w_ja_j(x)$ (note that this is an affine function), is constrained to be non-negative and for $T(S)_{2}$ the ReLU terms is constrained to be non-positive. We need not enumerate partitions of $T(\phi)$ since ReLU terms for $T(\phi)$ do not depend on data-points. The key observation is that the partition of $T(S)$ into sets $T(S)_1$ and $T(S)_2$ is a partition due to a hyperplane. 
	
	Number of combinations: According to the Hyperplane Arrangement Theorem, given a set of points $\{x^i\}_{i \in N}$ in $\Rbb^d$, the number of distinct partitions created by linear separators is $O(N^d)$. Moreover, due to~\cite{EOS86}, we can enumerate all possible partitions created by linear separators in $O(N^d)$ time. Therefore, there are a total of $O(N^{kd})$ possible combinations of $Q_j, j \in [k]$. For each such $Q_j, j \in [k]$, there are $2^{k}$ non-empty subsets $T(S) \subseteq [N]$. For each $T(S), S \ne \phi$, there are $O(|T(S)|^d) = O(N^d)$ possible ways to partition $T(S)$ into $T(S)_{1}$ and $T(S)_{2}$. So number of product combinations is $O(N^{(2^k-1)d})$. Hence there are a total of $O(N^{(kd + (2^k -1) d) })$ combinations.
	
	Number of convex programs: By Observation~\ref{obs:1} it suffices to check for $%\deleted{\theta,} 
	w_1,\dots,  w_k = \pm 1$. We will divide the optimization problem in two cases $w_0\ge 0$ and $w_0\le0$. 
	%In both cases, we will solve for convex program for all possible cases of $\theta, w_1, \dots, w_k \in \{-1,1\}$. 
	So there are a total of $2^{k+%\replaced{
		1}%{2}}
		$ convex programs for each possible combination of $Q_j$, $T(S)_1$ for all $S \subseteq [k]$ of the following form:
	%Since convex program is efficiently solvable so we get the global optimality of the training problem for 2-ReLU NN.\\
	
	%
	%	To claim global optimality, we need to enumerate all possible $Q_j/\ol{Q}_j, j \in [k]$ which takes $O(N^{d})$ time each, in view of result by \cite{EOS86}. Also total number of possible partitions is $O(N^d)$. So total $O(N^{kd})$ time to enumerate all possible combinations of $Q_j, j \in [k]$. For each enumerated combination, $Q_j, j \in [k]$, we again need to enumerate all possible partitions of $T(S)$ which are $T(S)_{1}/T(S)_{2}$. Using the same result \cite{EOS86}, we know that this can be done in $O(|T(S)|^d) = O(N^d)$ time for all $S \subseteq [k], S \neq \phi$. We have to consider all possible combinations of partitions of $T(S), S \subseteq [k], S \neq \phi$. Hence we need $O(N^{(2^k-1)d})$ time to enumerate all possible combinations. Since we need to solve $O(N^{(2^k-1)d})$ convex programs for each choice of $Q_j, j \in [k]$ therefore we need to solve $O(N^{(2^k-1+k)d})$ convex programs. Each program can be solved in poly(N,d) time. Therefore overall this is poly(N) algorithm for a fixed $d$ and $k$.\\
	%	Now we give details of the convex program. Fix $\theta, w_1,\dots, w_k$ for some value in $\{-1,1\}$. Then objective can be written as 
	
	\begin{equation*}
	\textup{min} \sum\limits_{\substack{S \subseteq [k], \\ S \neq \phi}} \Bigg\{ \sum\limits_{i \in T(S)_1} \Big(%\deleted{\theta} 
	\Big(w_0 + \sum\limits_{j \in S} w_ja_j(x^i)\Big)-y_i\Big)^2 +\sum\limits_{i \in T(S)_2} (0 -y_i)^2 \Bigg\} + \sum\limits_{i \in T_\phi} \Big(%\deleted{\theta}
	\relu{w_0} - y_i \Big)^2
	\end{equation*}
	subject to constraints\\
	\begin{equation} \label{eq:int_rel1}
	\begin{alignedat}{2}
	a_j(x^i) &\ge 0, \quad \quad & &\forall j,\ i\in Q_j\\
	a_j(x^i) &\le 0, & &\forall j,\ i\in \ol{Q}_j
	\end{alignedat}
	\end{equation}
	\begin{equation}\label{eq:int_rel2}
	\begin{alignedat}{2}
	w_0 + \sum\limits_{j \in S}w_ja_j(x^i) &\ge 0,\quad \quad \quad \quad & &\forall S 
	\subseteq [k], S\neq \phi,\ i \in T(S)_{1}\\
	w_0 + \sum\limits_{j \in S}w_ja_j(x^i) &\le 0,& &\forall S \subseteq [k], S\neq \phi, \ i \in T(S)_{2}
	\end{alignedat}
	\end{equation}
	Moreover, we add constraint $w_0 \ge 0$ or $w_0\le 0$ and change the $\relu{w_0}$ term in objective with $w_0$ or $0$ respectively. Every program has $k(d+1)+1$ variables in $a_1, \dots, a_k, w_0$. Total number of constraints is at most $kN+N+1$. Note that, for constraints of type \eqref{eq:int_rel1}, for each $j$, number of constraints equals $|Q_j \cup \ol{Q}_j| = N$. Hence total number of constraints of type \eqref{eq:int_rel1} are $kN$. Similarly, for constraints of type \eqref{eq:int_rel2}, for each $S \subseteq [k]$, we have total of $|T(S)_1 \cup T(S)_2| = |T(S)|$ constraints. Hence total constraints of type \eqref{eq:int_rel2} are $\sum\limits_{\substack{S \subseteq [k], \\ S \neq \phi}} |T(S)| \le N$ (This follows due to observation that $T(S), S\subseteq [k]$ is a partition of $[N]$). One more constraint is on $w_0$. Hence total number of constraints is $(k+1)N+1$. Since number of constraints and variables are $\poly(k,d,N)$ and objective is convex quadratic so we conclude that this program can be solved in $\poly(N,k,d)$ time. 
	
	Finally, the total number of convex programs to be solved is $O(2^{k+1} \cdot N^{kd +(2^k-1)d})$.
	%	Since we enumerate over all possible combination of partitions $Q_j/\ol{Q}_j, j\in [k]$ and $T(S)_{1}/T(S)_2, S \subseteq [k]$, hence we conclude that best solution out of all obtained solutions will be globally optimal.

\subsection{Proof of Proposition \ref{lemma_N_node}}\label{apx_lemma_N_node}
	Before proving this proposition, we state a polynomial time algorithm (Theorem 1 of \cite{CSMBO16}) for training single hidden layer neural network.
	\begin{proposition}\label{prop_N_node}
		 Let $f(x) = \sum_{j =1}^{N}w_j\relu{a_j(x)} + w_0$ be a single hidden layer neural network with $N$ nodes. Let $x^i \in \Rbb^d, i =1, \dots, N$ be distinct data-points, and let $y^i \in \Rbb, i = 1, \dots, N$ be arbitrary labels. Then there exists a poly($N,d$)-time algorithm to train this neural network such that $f(x^i) = y^i,  i =1, \dots, N$.
		%There exists a poly($N,d$)-time algorithm to train a single hidden layer neural network with $N$ nodes and ReLU activations which can represent any function on sample of size $N$ in dimension $d$.
	\end{proposition}
	Now we are ready to prove Proposition \ref{lemma_N_node}.\\
	Note that a N-ReLU NN can be written as $c(x) = \relu{\sum\limits_{j=1}^N w_j\relu{a_j(x)} + w_0}$. Suppose $y = [y^1, \dots, y^N]^T \in\Rbb^N$ be a vector of labels. We may assume that $y \ge \0$ since otherwise \replaced{the answer to the training problem is a straight-forward ``No"}{we can add a constant term to each label in $y$}. \added{Now, we prove that weights $w_i,  i =0, \dots, N$ and affine functions $a_j, j = 1, \dots, N$, satisfying $c(x^i) = y^i$ for all $i \in [N]$, can be computed in poly($N,d$)-time. Hence, providing a ``Yes" answer to the training problem in poly($N,d$)-time.} 
	
	Since $y^i \ge \0$ for all $i \in [N]$ and we want $y^i = c(x^i) = \relu{f(x^i)}$ \deleted{so we have $f(x^i) = y^i$}, where $f(x) = \sum\limits_{j=1}^N w_j\relu{a_j(x)} + w_0$, \added{it suffices}\footnote{Note that when $y^i = 0$, we can have $f(x^i) \le 0$. However, assuming $f(x^i) = 0$ is sufficient since we can find a solution for arbitrary nonpositive assignments of $y^i$. This also keeps the proof clean as we can assume that $f(x^i) = y^i$ for all labels $y^i$ uniformly (even for the $0$-labels).} to show that $f(x^i) = y^i$ for all $i \in [N]$. %Note that function $f$ with $w_0 = 0$ is a single hidden layer ReLU NN used in \cite{CSMBO16}. 
	Using the fact that number of nodes in $f$ matches the number of data points, $N$, then applying Proposition \ref{prop_N_node}, we obtain the result.

\subsection{Proof of Lemma \ref{forward_direction}} \label{sec:proof_forward_direction}
Suppose $(\alpha_1, \beta_1)$ and $(\alpha_2, \beta_2)$ are solution satisfying condition for 2-hyperplane separability. Note that there is a data-point $\textbf{0} \in S_1$ so we obtain $\beta_1, \beta_2 > 0$. Without loss of generality we can assume $\beta_1 = \beta_2 = 0.5$. This is due to the fact that scaling the original solution by any positive scalar yields a valid solution. %\deleted{Define $\epsilon := \min\limits_{i \in S_0}\Big\{ \relu{-\alpha_1^Tx^i - \beta_1} + \relu{-\alpha_2^Tx^i - \beta_2} \Big\}$. Note that since we have a valid solution to 2-hyperplane separability, we must have $\epsilon>0$ because sum of two non-negative quantities where at least one is positive must be positive. Define $\eta := \min\{\frac{1}{2}\epsilon, \frac{1}{4}\}$. By definition $\eta>0$ and hence $\eta^{-1}$ is defined. Now 2-ReLU neural network function can be written as follows:} \added{
Now we show that the solution of 2-hyperplane separability problem can be used to show hard-sorting of $S_0\cup T_0 \cup S_1 \cup T_1$ with respect to $S_1\cup T_1$. Hence in view of Lemma \ref{lemma_hard-sort}, we obtain the existence of a solution for problem (\textbf{P})%}
. 
%\[\deleted{f(x,y,z) = \frac{1}{\eta}\relu{-\relu{-\alpha_1^Tx-\beta_1-y} - \relu{-\alpha_2^Tx -\beta_2 - z} + \eta}.}\]

Set $w_1 = w_2 = -1,\ c = 0$. Moreover, for $(x, y, z)\in \Rbb^{d+2}$, consider the affine map $l_1(x,y,z) = -\alpha_1^Tx-y-\beta_1$ and $l_2(x,y,z) = -\alpha_2^Tx-z -\beta_2$. We claim that $w_1, w_2, c, l_1, l_2$ satisfy hard-sorting condition \eqref{eq:hard-sort} for $S_0\cup T_0 \cup S_1 \cup T_1$ with respect to $S_1 \cup T_1$. In particular, note that 

%\deleted{We claim that 2-ReLU NN with weights assigned as above solves the training problem (\textbf{P}).}
\begin{itemize}
	\item[1.] For $x \in S_1$, we have \[ -\relu{-\alpha_1^Tx-\beta_1} - \relu{-\alpha_2^Tx-\beta_2}  = 0 = c.\]
	\item[2.] For $x = (\0,l,m) \in T_1$, we have 
	\[-\relu{-\beta_1 -l } - \relu{ -\beta_2 -m} = 0.\] This follows since $\beta_1 = \beta_2 = 1/2$ and $l,m \in [0.75,2.25]$ so the two ReLU terms inside are both zero for all $x \in T_1$.
	\item[3.]For $x \in S_0$, we have \[ -\relu{-\alpha_1^Tx-\beta_1} - \relu{-\alpha_2^Tx -\beta_2} <0 .\] This follows since at least one of $\alpha_1^Tx +\beta_1$ and $\alpha_2^Tx+\beta_2$ is strictly negative for $x\in S_0$ as $(\alpha_1, \beta_1)$ and $(\alpha_2, \beta_2)$ are solution for 2-hyperplane separability problem.
	\item[4.] For $x = (\0, l,m) \in T_0$, we have 
	\[-\relu{-\beta_1 -l } - \relu{ -\beta_2 -m} < 0.\] This follows since $\beta_1 = \beta_2 = 1/2$ and either $l$ or $m$ equals $-1$ for $x \in T_0$.
\end{itemize}
%\begin{itemize}
%	\item[1.] For $x \in S_1$, we have \[f(x,0,0) = \frac{1}{\eta}\relu{-\relu{-\alpha_1^Tx-\beta_1} - \relu{-\alpha_2^Tx-\beta_2} + \eta} = \frac{1}{\eta}[0+0+\eta] = 1.\]
%	\item[2.]For $x \in S_0$, we have \[f(x,0,0) = \frac{1}{\eta}\relu{-\relu{-\alpha_1^Tx-\beta_1} - \relu{-\alpha_2^Tx -\beta_2} + \eta}.\] Now note that $\relu{-\alpha_1^Tx-\beta_1} + \relu{-\alpha_2^Tx -\beta_2} \ge \epsilon, \forall \ x \in S_0$. So inner term of above expression is less than or equal to $-\epsilon + \eta$. Since $\eta \le \frac{1}{2}\epsilon$, the inner term is negative. Hence for all $x \in S_0,\ f(x,0,0) = 0$.
%	\item[3.] For $x = (\0,l,m) \in T_1$, we have 
%	\[f(\textbf{0}, l,m) = \frac{1}{\eta}\relu{-\relu{-\beta_1 -l } - \relu{ -\beta_2 -m} + \eta}.\] Note that since $\beta_1 = \beta_2 = 1/2$ and $l,m \in \{1,2\}$ so the two ReLU terms inside are both zero for all $x \in T_1$. Hence $f(x) = 1$.
%	\item[4.] For $x = (\0, l,m) \in T_0$, we have 
%	\[f(\textbf{0}, l,m) = \frac{1}{\eta}\relu{-\relu{-\beta_1 -l } - \relu{ -\beta_2 -m} + \eta}.\] Note that since $\beta_1 = \beta_2 = 1/2$ and either $l$ or $m$ equals $-1$, we obtain that either one of the ReLU terms equals 1/2. So inner term is less than or equal to $-1/2 + \eta$, but $\eta \le 1/4$ therefore $f(x) = 0$ for all $x \in T_0$.\end{itemize}
This proves hard-sorting of $S_0\cup T_0 \cup S_1 \cup T_1$ with respect to $S_1 \cup T_1$ and hence we have the existence of solution for training problem (\textbf{P}).

\subsection{Proof of Lemma \ref{lemma_hard-sort}} \label{sec:proof_hard-sort}
%\added{
We first prove the forward direction. %} 
Suppose points are hard-sorted as required by %\added{
the %} 
lemma%\deleted{with condition 1 of hard sorting}
. Then define $\eps := \min\limits_{x \in S_0 \cup T_0} -w_1 \relu{l_1(x)} - w_2\relu{l_2(x)} + c$. By definition, we have $\eps > 0$. Then %\deleted{it is easy to check that}
neural network $f(x) = \frac{2}{\eps}\relu{w_1\relu{l_1(x)} +w_2\relu{l_2(x)} - c + \eps/2}$ solves training problem. %\added{
This can be easily checked from the fact that%} 
$$w_1\relu{l_1(x)} + w_2\relu{l_2(x)} -c \begin{cases}
	= 0 &\text{if } x \in S_1\cup T_1; \\ < -\eps &\text{if }x \in S_0\cup T_0,	\end{cases}$$ %\added{
	which holds under the assumption of hard-sorting%}
	.% \deleted{Similar arguments hold when points can be hard-sorted with condition 2 of hard-sorting.}
\\
Now we assume that points cannot be hard-sorted and conclude that there does not exist weight assignment solving training problem of 2-ReLU NN, hence proving the backward direction. Since the points cannot be hard-sorted so there does not exist any $l_1, l_2, w_1, w_2, c$ satisfying %\replaced{
condition \eqref{eq:hard-sort}%}{either condition 1 or condition 2}
. This fact along with Remark \ref{parity_invariant_remark} implies that for all possible weights we either have\\ 
a) $w_1 \relu{l_1(x)} + w_2\relu{l_2(x)}$ is not constant for all $x \in S_1 \cup T_1$ or\\
b) If $w_1\relu{l_1(x)} + w_2\relu{l_2(x)} = c$ for all $x \in S_1 \cup T_1$ and some constant $c$, then same expression evaluated on $x \in S_0 \cup T_0$ is not strictly on same side of $c$.
\\
%\added{Note that if $w_1\relu{l_1(x)}+ w_2\relu{l_2(x)}$ equals $c$ for all $x \in S_1\cup T_1$ and is $>c$ for all $x\in S_0\cup T_0$ then $-w_1, -w_2, -c, l_1, l_2$ satisfy hard-sorting. So we can only have possibilities $a)$ or $b)$ mentioned above.}\\
If we choose $l_1, l_2, w_1, w_2, c$ such that a) happens, then such weights will not solve training problem as their output of 2-ReLU NN for points $p \in S_1 \cup T_1$ will be at least two distinct numbers which is an undesirable outcome. Specifically, we want $%\deleted{\theta} 
\relu{w_0+ w_1\relu{l_1(x)} + w_2\relu{l_2(x)}}$ to evaluate to $1$ for all $x \in S_1 \cup T_1$%\deleted{ so we must have $\theta>0$ and $w_0+ w_1\relu{l_1(x)} + w_2\relu{l_2(x)}$ take a constant positive value for all $x \in S_1 \cup T_1$}
. Hence $w_1\relu{l_1(x)} +w_2 \relu{l_2(x)}$ must be a constant for all $x \in S_1 \cup T_1$. This requirement is violated in case a).\\
If we choose $l_1, l_2, w_1, w_2, c$ such that b) happens, then we can set $w_0, \theta$ such that $F(x) = \theta \relu{w_1\relu{l_1(x)} + w_2\relu{l_2(x)} + w_0}$, $w_0+c > 0$ and $\theta = \frac{1}{w_0 +c}$. Here we introduced another parameter $\theta > 0$ in the definition of $F$ for sake of convenience of argument but note that $\theta$ can be absorbed in the definition of $l_1$ and $l_2$ to obtain the original neural network function defined \eqref{2relu_nn_def}. Since not all $x \in S_0 \cup T_0$ are strictly on one side, we conclude there exist $x' \in S_0 \cup T_0$ such that $w_1\relu{l_1(x')} + w_2 \relu{l_2(x')} = c' \ge c$ hence $F(x') := \theta \relu{w_1\relu{l_1(x')} + w_2\relu{l_2(x')} + w_0} \ge 1$ which is an undesirable outcome for a point with label $0$.
\\
Since all choices of $l_1, l_2, w_1, w_2, c$ satisfy either a) or b), we conclude that there does not exist weights solving training problem of 2-ReLU NN.

\subsection{Proof of Proposition \ref{prop:simplified}}\label{apx_prop_simplified}
In order to prove Proposition \ref{prop:simplified}, we need to prove one more technical result stated below. Proof of this new proposition is deferred to Appendix \ref{apx_lemma_parallel_lines} but here we state it and proceed with the proof of Proposition \ref{prop:simplified}.
\begin{proposition}\label{lemma_parallel-lines-new}
	%Let $c$ be an arbitrary constant. 
	Suppose affine functions $a_1, a_2:\Rbb^2\to \Rbb$ and $w_1, w_2 \in \{-1, 1\}$ be such that (i) $a_1(x)$ is a constant for all $x \in \Rbb^2$ or (ii) $a_2(x)$ is a constant for all $x \in \Rbb^2$ or (iii) $w_1a_1(x) + w_2a_2(x)$ is a constant for all $x\in \Rbb^2$, %where $w_1, w_2, c$ are some constants, %and $w_1, w_2\in \{-1,1\}$,
	then $a_1,a_2, w_1, w_2$ cannot satisfy hard-sorting of the data points $T_1\cup T_0\cup \{\0\}$ with respect to $T_1\cup \{\0\}$.
\end{proposition}
%The proof of Proposition \ref{lemma_parallel-lines-new} can be found in \ref{apx_lemma_parallel_lines}.
%\begin{remark}\label{remark_line}A key corollary of Proposition \ref{lemma_parallel-lines-new} is that if $a_1, a_2$ satisfy hard-sorting of gadget data points $T_1 \cup T_0 \cup \{\textbf{0}\}$ with respect to $T_1\cup \{\0\}$ then set $L:= \{x| w_1a_1(x) + w_2a_2(x) = c\}$ is a line for all $c \in  \Rbb$. Henceforth, in the proofs of subsequent propositions, we will refer $L$ as $w_1a_1 + w_2a_2 = c$ hiding the input variable, $x$, for ease of notation. \end{remark}
Now we are ready to prove Proposition \ref{prop:simplified}.

Let $a_1, a_2$ satisfy hard-sorting of $T_1\cup T_0\cup \{\0\}$ with respect to $T_1\cup \{\0\}$. Then due to Remark \ref{subset_prop_hard-sort_remark}, we have that $a_1, a_2$ satisfy hard-sorting of $T_1\cup T_0$ with respect to $T_1$. We will show that any $a_1, a_2$ satisfying the above condition must satisfy the requirement of Proposition \ref{prop:simplified}.

Let us partition the set of points $\Rbb^2$ into four partitions $S_{0,0}, S_{+,0}, S_{0,+}$ and $S_{+,+}$ based on sign of $\relu{a_1(x)}$ and $\relu{a_2(x)}$. Then, we have to show that at least one element in $T_1$ lies in the partition $S_{0,0}$. 

For sake of contradiction, assume that $T_1 \cap S_{0,0} = \emptyset$. Then, using pigeonhole principle, we have that at least one of $S_{+,0}, \ S_{0,+}$ and $S_{+,+}$ must contain three points from the set $T_1$. Note that any three points in the set $T_1$ are not collinear. Moreover, the function $w_1\relu{a_1}+w_2\relu{a_2}$ is affine in all three regions, $S_{+,0}, S_{0,+}$ and $S_{+,+}$ of $\Rbb^2$ and is non-constant in view of Proposition \ref{lemma_parallel-lines-new}. Hence, we cannot satisfy hard-sorting since those three points in $T_1$ will break the requirement in condition \eqref{eq:hard-sort} for hard-sorting. Hence, we obtain a contradiction.

\subsection{Proof of Proposition \ref{lemma_parallel-lines-new}} \label{apx_lemma_parallel_lines}
First observe that if $a_1, a_2:\Rbb^2 \to \Rbb$ satisfy hard-sorting of $T_1 \cup T_0 \cup\{\0\}$ with respect to $T_1 \cup \{\0\}$, then neither of them can be a constant function. In particular, it is straightforward to see that both of them cannot be constant. If only one of them is constant, then data needs to be linearly separable which is not the case for gadget data-points $T_1 \cup T_0 \cup \{\0\}$. Therefore, we will assume that both of them are affine functions with non-zero normal vectors.\\
%\added{
Note that in view of Remark \ref{rem:w_pm1} and the fact that $w_1a_1(x) + w_2a_2(x) = c$ for all $x \in \Rbb^2$, we may assume that magnitude of the normal to these lines is equal i.e. $\|\nabla a_1\| = \| \nabla a_2 \|\neq 0$. For the sake of this proof, we extend the definition of hard-sorting to include the condition%}
\[ w_1\relu{a_1(x)} + w_2\relu{a_2(x)} \begin{cases}
=c &\text{for all }x \in T_1\cup\{\0\};\\
>c &\text{for all }x \in T_0,
\end{cases} \]  
%\added{
along with condition \eqref{eq:hard-sort}. Due to this extended definition and in view of Remark \ref{parity_invariant_remark}, %} 
we just need to check for case $(w_1, w_2) = (1,1)$ and $(w_1, w_2) = (1,-1)$.  More specifically, $(w_1, w_2) = (-1,-1)$ yields a hard-sorting solution iff there exists a hard-sorting solution for $(w_1,w_2) = (1,1)$. Equivalent argument can be made about the case $(w_1, w_2) = (-1,1)$ and $(w_1,w_2) = (1,-1)$.\\
Then, we have two possible situations here: $a_1, a_2$ satisfy 1) $a_1(x) + a_2(x) = c, \forall \ x \in \Rbb^2$ when normals point in opposite directions and 2) $a_1(x) -a_2(x)=c, \forall x \in \Rbb^2$ when normals point in same direction. %\deleted{(Here $c \in \Rbb$ is a constant).} 
We will consider both these cases separately and show that expression $w_1\relu{a_1} + w_2\relu{a_2}$, for the choices of $w_1, w_2$ mentioned above, cannot hard-sort the data as required. %\deleted{irrespective of the sign of weights $w_1,w_2$. Due to Remark \ref{parity_invariant_remark}, }
\\
\textbf{Case 1:} Normals point in the opposite directions. Here $w_1 = w_2 = 1$ and we assume $a_1 + a_2 = c$. Suppose $c \ge 0$. Then it can be verified that 
\[\relu{a_1(x)}+\relu{a_2(x)} = \begin{cases}
c &\text{if }c \ge a_1(x) \ge 0\\
a_1(x) &\text{if }a_1(x) \ge c\\
c-a_1(x) &\text{if }a_1(x) \le 0.
\end{cases}\]
By \added{the} %\added{
extended %} 
hard-sorting requirement, we need all %\deleted{five} 
points in $T_1 \cup \{\textbf{0}\}$ \replaced{to}{should} be contained in the set $\{x: a_1(x) \in [0,c] \}$ and all points in $T_0$ should not be in this set. Now observe that if $c= 0$, then the set $\{x: a_1(x) = 0\}$ is one dimensional, and therefore cannot contain all the points of $T_1 \cup \{\0\}$. Hence we must have $c > 0$ and all points in $T_1 \cup \{\0\}$ lie inside the region of two parallel lines $a_1(x) = 0$ and $a_1(x) = c$ as $\relu{a_1(x)} +\relu{a_2(x)}$ evaluates to the constant $c$ in this region. It can be seen that this separation of $T_1\cup \{\0\}$ from $T_0$ is impossible to achieve by two parallel lines.%$a_1 = 0$ and $a_1 = c$ for the given set up of data points $T_0 \cup T_1 \cup \{\textbf{0}\}$. 

Similarly when $c < 0$, then it can be verified that 
\[\relu{a_1(x)}+\relu{a_2(x)} = \begin{cases}
0 &\text{if }c \le a_1(x) \le 0\\
a_1(x) &\text{if }a_1(x) \ge 0\\
c-a_1(x) &\text{if }a_1(x) \le c
\end{cases}\]
Again, for \added{the extended} hard-sorting, as in the previous case, we need all %\deleted{five} 
points in $T_1 \cup \{\textbf{0}\}$ should be in set $\{x: a_1(x) \in [c,0] \}$ and all points in $T_0$ should not be in this set which cannot be achieved.\\
%Now consider case where parity of $w_i$'s is different. When $c \ge 0$, it can be verified that 
%\[\relu{a_1}(x)- \relu{a_2}(x) = \begin{cases}
%a_1(x)-c &\text{if }a_1(x) \le 0\\
%2a_1(x)-c &\text{if }0 \le a_1(x) \le c\\
%a_1(x) &\text{if }a_1(x) \ge c
%\end{cases}\]
%It is clear that all five points in $T_1 \cup \textbf{0}$ can not be on line $a_1(x) = \delta$ for any constant $\delta$. So this type of function can not hard-sort given points.
%
%When $c < 0$. Then it can be verified that 
%\[\relu{a_1}(x)- \relu{a_2}(x) = \begin{cases}
%a_1(x) &\text{if } a_1(x) \ge 0\\
%0 &\text{if } 0\ge a_1(x) \ge c\\
%a_1(x)-c &\text{if }a_1(x) \le c
%\end{cases}\]
%For hard-sorting with this type of function, we again need all points in $T_1 \cup \{ \textbf{0}\}$ to be in set $\{x: a_1(x) \in [c,0]\}$ while all remaining points to be outside this set which is not possible to be achieved by any affine function $a_1$.\\
\textbf{Case 2:} Normals point in the same direction. Then $a_1(x) -a_2(x) = c$. Suppose $c \ge 0$. Then it can be verified that 
\[\relu{a_1(x)}-\relu{a_2(x)} = \begin{cases}
a_1(x) &\text{if }c \ge a_1(x) \ge 0\\
c &\text{if }a_1(x) \ge c\\
0 &\text{if }a_1(x) \le 0
\end{cases}\]
If $c = 0$ then $\relu{a_1(x)} -\relu{a_2(x)} = 0$ for all $x \in \Rbb^2$. So this cannot hard-sort data. Hence for hard-sorting we definitely need $c > 0$. Moreover, we need either 1) $T_1 \cup \{\added{\0}\} \subset \{x: a_1(x) \le 0\}$ and $T_0 \subset \{x: a_1(x) > 0\}$ or 2) $T_1 \cup \{\added{\0}\} \subset \{x: a_1(x) \ge c\}$ and $T_0 \subset \{x: a_1(x) < c\}$. In both cases, \added{we need the points in $T_1 \cup \{\0\}$ must be separable from the points in $ T_0 $ by a line.} This is not possible.\\
Note that when $c < 0$, one can write $a_2 - a_1 = -c$ and write similar functional form for $\relu{a_2}-\relu{a_1}$.\\
%Now consider case when parity of weights $w_i$'s is same. Again suppose $c \ge 0$. Then one can verify that 
%\[\relu{a_1}(x)+\relu{a_2}(x) = \begin{cases}
%0 &\text{if }a_1(x) \le 0\\
%a_1(x) &\text{if }c \ge a_1(x) \ge 0\\
%2a_1(x) -c&\text{if }a_1(x) \ge c
%\end{cases}\]
%Clearly, for hard-sorting we need strict separation by the line $a_1(x) = 0$ which is not possible. When $c<0$ then we can write $a_2 -a_1 = -c$ and we will need strict separation at line $a_2(x) = 0$ which is again not possible.\\
Since in both cases, %\replaced{
we %}{none of the sign combinations} 
were %\added{
un%}
able to achieve hard-sorting $T_1 \cup T_0 \cup \{\textbf{0}\}$ w.r.t. $T_1 \cup \{\textbf{0}\}$, so we conclude the proof.

\subsection{Proof of Lemma \ref{lemma_parity1}} \label{apx_lemma_parity1}
Proposition \ref{prop:simplified} yields that any hard-sorting $a_1, a_2$ must satisfy $a_1(x)\le0, a_2(x)\le 0$ for at least one $x\in T_1$.\\
Now, suppose sign of $w_1, w_2$ is different. Suppose $w_1 = 1, w_2 = -1$. Since $a_1$ and $a_2$ satisfy hard-sorting of gadget so we have $\relu{a_1(x)} -\relu{a_2(x)}= c, \forall \ x \in T_1$. Due to Proposition \ref{prop:simplified}, we obtain $c = 0$. Then to fulfill hard-sorting condition, we need $\relu{a_1(x)}-\relu{a_2(x)} < 0\ \forall x \in T_0$. (The case for $w_1 = -1, w_2 = 1$ will have same proof with all $a_2$ exchanged by $a_1$ in next 3 lines.)
This implies $a_2(x) > 0$ for all $x \in T_0$. However note that $T_1\subset \text{conv}(T_0)$. So we get a contradiction to the assumption that sign of weights $w_1, w_2$ is different.\\
Now note that if sign of $w_1, w_2$ is same then we cannot set $w_1 = w_2 = 1$ due to requirement \eqref{eq:hard-sort} of hard-sorting. Hence we have that $w_1 =w_2 = -1$. 

\subsection{Proof of Corollary \ref{cor-2-j}}
\label{apx_cor_2j}
The reduction is again from 2-affine separability problem. It involves the same gadget of 18 points in Figure \ref{gadget} and similar labels except that labels need to be extended from $\Rbb$ to $\Rbb^j$. Simply add $j-1$ zeros to original output labels in (\textbf{P}). We call this instance ($\textbf{P}^\prime$).\\
First, we show that if there is a solution to 2-affine separability problem then there is a solution for ($\textbf{P}^\prime$). For the output of the first node, we can use the construction in Lemma \ref{forward_direction} to obtain an exact fit for the first node. For rest $j-1$ nodes, the output is $0$ for all data-points so we can easily extend the solution to obtain an exact fit for all nodes. In particular, for $k \in [j]$, every $k$-th node in the second layer is connected to two nodes in the first layer by distinct edges whose weights are parameterized by $ w_{k,1}, w_{k,2}$ and bias weight $w_{k,0}$. We can set $w_{k,1}= w_{k,2} = -1$ and $w_{k,0} = 0$ for all $k \in [j] \setminus \{1\}$. The output at $k$-th node can be written as $\relu{w_{k,1}\relu{a_1(x)} +w_{k,2}\relu{a_2(x)} + w_{k,0}}$. In view of the above values of $w_{k,0}, w_{k,1}$ and $w_{k,2}$ for $k \in [j] \setminus \{1\}$, we note that $w_{k,1}\relu{a_1(x)} +w_{k,2}\relu{a_2(x)} + w_{k,0} \le 0$ for all $k \in [j] \setminus \{1\}$. This yields the output $0$ at all nodes $k \in [j] \setminus\{1\}$, irrespective of the affine functions $a_1, a_2$ in the first layer. Hence all nodes are satisfied to global optimality which completes the forward direction of the reduction.

Reverse direction follows immediately from the construction of the gadget and Lemma \ref{reverse_direction}. To see this, note that solution for ($\textbf{P}^\prime$) implies that first node is satisfied to global optimality. Ignoring the rest $j-1$ nodes, we see that this solution is also a solution for (\textbf{P}). At this point, invoking Lemma \ref{reverse_direction} yields the solution corresponding to 2-hyperplane separability problem. Hence, we conclude that training problem of $(2,j)$-ReLU NN is NP-hard.

\end{appendices}

%\section*{References}

%\bibliography{mybibfile}
%\bibliographystyle{acm} 
\bibliography{deeplearning-complexity}

\end{document}